\definecolor{amazing}{RGB}{254,67,101}
\newcommand{\todo}[1]{\fxfatal{\color{red}#1}}
\newcommand\mathindex[1]{\index[sym]{\ensuremath{{#1}}}}
\newcommand\mathindex*[1]{\index*[sym]{\ensuremath{{#1}}}}
\newcommand{\raisemath}[1]{\mathpalette{\raisem@th{#1}}}
\newcommand{\raisem@th}[3]{\raisebox{#1}{$#2#3$}}
\newcommand\Problem[2][]{%
    \index[prob]{#2@\textsc{#2}}\textsc{#2}\xspace%
}
\newclass{\paraNP}{paraNP}
\def\MSOii{\textsc{MSO${}_2$}\xspace}
\newcommand\restr[2]{{
  \left.\kern-\nulldelimiterspace 
  #1 
  \vphantom{\big|} 
  \right|_{#2} 
  }}
\def\half{{\sfrac{1}{2}}} 
\def\fivehalf{{\sfrac{5}{2}}} 
\def\sminor^#1{
    \preccurlyeq_{\mathrlap{\mathsf{m}}}^{#1}
} 
\def\ssminor^#1{%
    \mathbin{\dot\preccurlyeq_{{\mathrlap{\mathsf{m}}}}^{#1}}\,
} 
\def\stminor^#1{
    \preccurlyeq_{\mathrlap{\mathsf{t}}}^{#1}
} 
\def\sstminor^#1{%
    \mathbin{\dot\preccurlyeq_{{\mathrlap{\mathsf{t}}}}^{#1}}\,
} 
\def\grad_#1{\nabla\!_{#1}}
\def\sgrad_#1{{\dot\nabla}\!_{#1}}
\def\topgrad_#1{\widetilde \nabla\!_{#1}}
\def\stopgrad_#1{\dot{\widetilde\nabla}\!_{#1}}
\def\topomega_#1{\widetilde \omega_{#1}}
\def\colnum_#1{ \operatorname{col}_{#1} }
\def\wcolnum_#1{ \operatorname{wcol}_{#1} }
\def\adm_#1{ \operatorname{adm}_{#1} }
\renewcommand{\leq}{\leqslant}
\renewcommand{\geq}{\geqslant}
\renewcommand{\epsilon}{\varepsilon}
\newcommand{\eps}{\epsilon}
\newlength{\convarrowwidth}
\newcommand{\widthm}[1]{ \mathbf{#1} } 
\DeclareMathOperator{\tw}{ \widthm{tw} }
\DeclareMathOperator{\width}{ \widthm{width} } 
\def\YYYY{{Y_0 \uplus Y_1 \uplus \cdots \uplus Y_\ell}} 
\def\YYYY'{{Y'_0 \uplus Y'_1 \uplus \cdots \uplus Y'_{\ell'}}}
\newcommand{\yaay}{\kern4pt \ding{51} \kern-8pt \ding{51}}%
\theoremstyle{plain}
\newtheorem{lemma}{Lemma}
\newtheorem{theorem}{Theorem}
\newtheorem{proposition}{Proposition}
\newtheorem*{claim}{Claim}
\newtheoremstyle{case}
  {\topsep}   
  {\topsep}   
  {}  
  {\parindent}       
  {\bfseries} 
  {\normalfont.}         
  {5pt plus 1pt minus 1pt} 
  {#1 #2: {\normalfont #3}}          
\theoremstyle{case}
\newtheorem{case}{Case}
\numberwithin{subcase}{case}
\theoremstyle{definition}
\renewcommand*\etal{\xperiodafter{\emph{et~al}}}
\newcommand*\varrule[1][0.4pt]{\leavevmode\leaders\hrule height#1\hfill\kern0pt}
\setlist[1]{labelindent=\parindent,leftmargin=*} 
\setlist{itemsep=0pt}
\newenvironment{tightcenter}
 {\parskip=0pt\par\nopagebreak\centering}
 {\par\noindent\ignorespacesafterend}
\newlength{\RoundedBoxWidth}
\newsavebox{\GrayRoundedBox}
\newenvironment{GrayBox}[1]%
   {\setlength{\RoundedBoxWidth}{\textwidth-4.5ex}
    \def\boxheading{#1}
    \begin{lrbox}{\GrayRoundedBox}
       \begin{minipage}{\RoundedBoxWidth}%
   }{%
       \end{minipage}
    \end{lrbox}%
    \begin{tightcenter}%
    \begin{tikzpicture}%
       \node(Text)[draw=black!20,fill=white,rounded corners,%
             inner sep=2ex,text width=\RoundedBoxWidth]%
             {\usebox{\GrayRoundedBox}};
        \coordinate(x) at (current bounding box.north west);
        \node [draw=white,rectangle,inner sep=3pt,anchor=north west,fill=white] 
        at ($(x)+(6pt,.75em)$) {\boxheading};
    \end{tikzpicture}
    \end{tightcenter}\vspace{0pt}%
    \ignorespacesafterend
}    
\newenvironment{problem}[2][]{\noindent\ignorespaces%
                                \FrameSep=6pt%
                                \parindent=0pt%
                \vspace*{-.5em}
                \ifthenelse{\isempty{#1}}{%
                  \begin{GrayBox}{\textsc{#2}}%
                }{%
                  \begin{GrayBox}{\textsc{#2} parametrised by~{#1}}%
                }
                \index[prob]{#2@\textsc{#2}}%
                \begin{tabular*}{\textwidth}{@{\hspace{.1em}} >{\itshape} p{1.6cm} p{0.8\textwidth} @{}}%
            }{
                \end{tabular*}%
                \end{GrayBox}%
                \vspace*{-.5em}
                \ignorespacesafterend
            }  
\newenvironment{problem*}[2][]{\noindent\ignorespaces%
                                \FrameSep=6pt%
                                \parindent=0pt%
                \vspace*{-.5em}
                \ifthenelse{\isempty{#1}}{%
                  \begin{GrayBox}{\textsc{#2}}%
                }{%
                  \begin{GrayBox}{\textsc{#2} parametrised by~{#1}}%
                }
                \index[prob]{#2@\textsc{#2}|textbf}%
                \begin{tabular*}{\textwidth}{@{\hspace{.1em}} >{\itshape} p{1.6cm} p{0.8\textwidth} @{}}%
            }{
                \end{tabular*}%
                \end{GrayBox}%
                \vspace*{-.5em}
                \ignorespacesafterend
            }       
\newlength{\wleft}  \newlength{\wright}
\definecolor{Maroon}{cmyk}{0, 0.87, 0.68, 0.32}
\definecolor{RoyalBlue}{cmyk}{1, 0.50, 0, 0}
\definecolor{Black}{cmyk}{0, 0, 0, 0}
\definecolor{White}{rgb}{1, 1, 1}
\let\orgdescriptionlabel\descriptionlabel
\def\@savelabel{}
\renewcommand*{\descriptionlabel}[1]{%
  \let\orglabel\label
  \let\label\@gobble
  \phantomsection
  \def\@savelabel{#1}
  \edef\@currentlabel{{\def\hfil{}#1}}
  \edef\@currentlabelname{#1}%
  \let\label\orglabel
  \orgdescriptionlabel{#1}%
}
\def\namedlabel#1#2{\begingroup
   \def\@currentlabel{#1}%
   \label{#2}\endgroup
}
\renewcommand{\th}{%
    \ifmmode
        ^\mathrm{th}%
    \else%
        \textsuperscript{th}\xspace%
    \fi%
}
\newcommand{\st}{%
    \ifmmode
        ^\mathrm{st}%
    \else%
        \textsuperscript{st}\xspace%
    \fi%
}
\newcommand{\nd}{%
    \ifmmode
        ^\mathrm{nd}%
    \else%
        \textsuperscript{nd}\xspace%
    \fi%
}
\newcommand{\rd}{%
    \ifmmode
        ^\mathrm{rd}%
    \else%
        \textsuperscript{rd}\xspace%
    \fi%
}
\def\Nesetril{Ne\v{s}et\v{r}il\xspace}
\def\Dvorak{Dvo\v{r}\'{a}k\xspace}
\newcolumntype{m}{>{$}l<{$}}
\newcolumntype{M}{>{$\displaystyle}l<{$}} 
\newcolumntype{L}{l}  
\newcolumntype{C}{c}  
\newcolumntype{R}{r}  
\newcolumntype{X}{>{\global\let\currentrowstyle\relax}}
\newcolumntype{^}{>{\currentrowstyle}}
\title{Being even slightly shallow makes life hard}
\author[1]{Irene~Muzi}
\author[2]{Michael~P.~O'Brien}
\author[2]{Felix~Reidl}
\author[2]{Blair~D.~Sullivan}
\affil[1]{\small University of Rome, ``La Sapienza'', Rome, Italy\\
  \texttt{irene.muzi@gmail.com}}
\affil[2]{\small North Carolina State University, USA\\
  \texttt{(mpobrie3|fjreidl|vbsulliv)@ncsu.edu}}
\def\PosSAT{\Problem{Positive $1$-in-$3$SAT}}
\begin{document}

\maketitle
\begin{abstract}
We study the computational complexity of identifying dense substructures,
namely \emph{$r/2$-shallow topological minors} and \emph{$r$-subdivisions}.
Of particular interest is the case when $r=1$, when these substructures
correspond to very localized relaxations of subgraphs.
Since \Problem{Densest Subgraph} can be solved in polynomial time, we ask whether
these slight relaxations also admit efficient algorithms.

In the following, we provide a negative answer: \Problem{Dense
$r/2$-Shallow Topological Minor} and \Problem{Dense $r$-Subdivsion} are
already \NP-hard for~$r = 1$ in very sparse graphs. Further, they do not
admit algorithms with running time $2^{o(\tw^2)} n^{O(1)}$
when parameterized by the treewidth of the input graph for~$r \geq 2$
unless ETH fails.

\end{abstract}

\section*{Introduction}
Identifying dense substructures\textemdash in particular, subgraphs\textemdash is a frequent task
on graphs in real world-problems. Famously, Karp showed that the extreme
variant of this task, finding a complete subgraph with a certain number of
vertices, is \NP-complete. Asahiro~\etal showed, among other related results,
that the problem remains hard even if we ask for a subgraph
with~$\Theta(k^{1+\eps})$ edges on~$k$ vertices~\cite{kSubgraphComplexity}.
In contrast, the task of finding the
\emph{densest subgraph} without any restriction on its \emph{size}
is efficiently computable using flow-based
methods~\cite{DensestSubgraphFlow,DensestSubgraphFast}.

Many other substructures have played key roles in seminal work in structural
graph theory. For example, minors, topological minors, and immersions were at
the epicenter of Robertson and Seymour's graph minor program---work which
gave rise to a large body of algorithmic advances (see
\eg~\cite{BidimSurveyA,BidimSurveyB,DomsetKernelHMinorFree})
and laid the groundwork for Downey and Fellows' introduction of
parameterized complexity~\cite{DowneyFellows}. More recently,
\emph{shallow} (topological) minors enabled \Nesetril and Ossona de Mendez's
development of a comprehensive theory of sparse
graph classes~\cite{Sparsity}---again spawning a slew of related algorithmic results
(see \eg~\cite{BndExpKernelsJournal,FOBndExp,FONowhereDense,DomsetSparseKernel}
and~\cite{FelixThesis} for an overview). One common thread among
all these ideas is that certain (dense) substructures are \emph{excluded}
in order to gain algorithmic tractability, which conversely means that we would
like to be able to compute the densest occuring substructure in a graph.

It is therefore natural to ask whether finding
dense substructures is efficiently possible. In the case of \emph{minors},
Bodlaender, Wolle, and Koster showed that deciding whether some minor of the
input graph has \emph{degeneracy} larger than~$d$ is \NP-complete~\cite{ContractionDegeneracy}. They also note
that if~$d$ is considered a constant, the problem can be solved in cubic time
using the minor-test by Robertson and Seymour. The same observation holds if
we ask for a minor of \emph{density}~$d$ instead and we can equivalently
state that both problems are in \FPT~when parameterized by the target
degeneracy/density~$d$.

\Dvorak considered the problem of finding an \emph{$r$-shallow minor} of
degeneracy/density at least~$d$~\cite{DvorakThesis}. A graph~$H$
is an~$r$-shallow minor of a graph~$G$ if~$H$ can be obtained from~$G$
by contracting disjoint connected subgraphs of radius at most~$r$.
These substructures offer a natural way of intermediating between the locality
of subgraphs ($0$-shallow minors) and the global nature of
($\infty$-shallow) minor containment. He
proved that both variations are \NP-complete already in graphs of
maximum degree four and~$d \geq 4$ ($d \geq 2$ if degeneracy is the
measure). Accordingly, a parameterization by~$d$ cannot possibly yield
an fpt-algorithm---a sharp contrast to unrestricted minors. \Dvorak
showed that the problem is in \FPT~if parameterized by the treewidth~$\tw$
of the input graph and designed an~$O(4^{\tw^2} n)$ dynamic programming algorithm.

In this paper we focus on $r/2$-shallow \emph{topological} minors
and $r$-\emph{subdivisions}. Recall that a graph~$H$ is an $r$-shallow
topological minor of a graph~$G$ if an $\leq 2r$-subdivision of~$H$
is isomorphic to some subgraph of~$G$.  In particular, for~$r =1$ these
substructures fall between the notion of $0$-shallow minors (subgraphs)
and $1$-shallow minors. The complexity of finding such
substructures is of special interest since the densest $0$-shallow
minor can be computed in polynomial time, while finding the densest
$1$-shallow minor is \NP-complete (even for constant densities).

We show that~\Problem{Dense $r/2$-Shallow Topological Minor}
(\Problem{Dense $r/2$-STM}) and~\Problem{Dense $r$-Subdivision}
(\Problem{Dense $r$-SD}) are \NP-complete already in subcubic
apex-graphs\footnote{That is, a graph in which the removal of a single vertex
results in a subcubic planar graph.} for~$r \geq 1$ via a reduction  from \PosSAT.
Accordingly, a parameterization by the target density~$d$ does not make
these problems fixed-parameter tractable.
The same reduction also implies that neither problem can be solved in
time~$O(2^{o(n)})$ unless the ETH fails.
In other words, finding dense substructures which are just slightly `less
local' than subgraphs seems to be intrinsically difficult.

Following \Dvorak's results, we then consider a parameterization by
treewidth and ask whether an algorithm with running time better than
$O(2^{t^2}n)$ is possible. Surprisingly, we can rule out such an
algorithm already for~\Problem{Dense $1$-Shallow Topological Minor}:
unless the ETH fails, no algorithm with running time~$O(2^{o(t^2)} n)$
can exist.

\section{Preliminaries}
\noindent
For a graph $G$ we use $|G|$ to denote the number of vertices and $\|G\|$
to denote the number of edges in $G$. A graph~$H$ appears as an \emph{$r$-subdivision}
in a graph~$G$ if the graph obtained from~$H$ by subdividing every edge~$r$
times is isomorphic to some subgraph of~$G$. Similarly, $H$ is a
\emph{$r/2$-shallow topological minor}\footnote{The somewhat cumbersome convention of letting an~$r$-shallow
  topological minor contract paths of length~$2r+1$ is convenient in the
  broader context of sparse graph classes (\cf \cite{Sparsity}).
} of~$G$ if a graph obtained from~$H$
by subdividing every edge \emph{up to}~$r$ times is isomorphic to a subgraph
of~$G$. In both cases, the subgraph witnessing the minor is the \emph{model} and we call
those vertices in it that correspond to (subdivided) edges \emph{subdivision
vertices} and all other vertices \emph{nails}.
If $S_{uv}$ is the set of subdivision vertices on a subdivided $uv$ edge, we say $S_{uv}$ is \emph{smoothed} into the $uv$ edge.

\begin{figure}[tb]
  \centering\includegraphics[width=.5\textwidth]{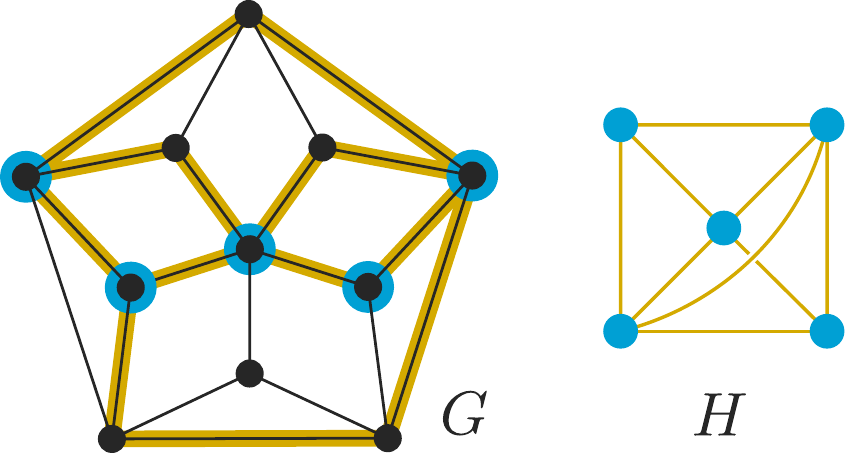}
  \caption{\label{fig:shallowtopminor}%
    The graph~$H$ is a $1$-shallow topological minor of~$G$, as witnessed by
    the model marked with blue nails and golden paths.
  }
\end{figure}

The following two problems are the focus of this paper:

\begin{problem}{Dense $r/2$-Shallow Topological Minor (Dens. $r/2$-STM)}
	Input: & A graph~$G$ and a rational number~$d$. \\
	Question: & Is there an~$r/2$-shallow topological minor~$H$ of~$G$
				with density $\|H\| / |H| \geq d$?
\end{problem} \vspace*{-1em}

\begin{problem}{Dense $r$-Subdivision (Dens. $r$-SD)}
	Input: & A graph~$G$ and a rational number~$d$. \\
	Question: & Is there a graph~$H$ that is contained
						in $G$ as an~$r$-subdivision with density $\|H\| / |H| \geq d$?
\end{problem}

\noindent
The following variant, which we prove to be \NP-complete in
Section~\ref{sec:np-hardness-eth}, might be of independent interest:

\begin{problem}{Dense Bipartite Subdivision}
  Input: & A bipartite graph~$(X,Y,E)$ and a rational number~$d$. \\
  Question: & Are there subsets~$X' \subseteq X, Y' \subseteq Y$ such
              that all vertices in~$X'$ can be smoothed into unique edges
              in~$Y'$ and~$|X'|/|Y'| \geq d$?
\end{problem}

\noindent
Our main tool will be linear reductions from the following \Problem{SAT}-variant:

\begin{problem}{Positive $1$-in-$3$SAT}
    Input:    & A CNF boolean formula $\psi$ with only
    					  positive literals. \\
    Question: & Does~$\phi$ have a satisfying assignment such that
    						each clause contains exactly one true variable?
\end{problem}

\noindent
Mulzer and Rote showed~\cite{Planar1in3Sat} that \PosSAT remains
\NP-hard when restricted to \emph{planar} formulas. A formula~$\phi$ is planar if
the graph obtained from~$\phi$ by creating one vertex for each clause and variable and
connecting a variable-vertex to a clause-vertex if the clause contains said
variable is planar.

Schaefer~\cite{1in3SatHardness} provided a linear reduction from \Problem{3SAT}
to~\Problem{$1$-in-$3$SAT}. We can further easily transform a formula~$\phi$
with negative literals into one with only positive literals as follows:
for each variable~$x$, introduce the variables~$x^+,x^-,a_x,b_x,c_x$.
Replace every occurrence of~$x$ with~$x^+$ and every occurrence of~$\bar x$ with~$x^-$
and add the clauses
\[
	\{x^+, x^-, a_x\}, \{x^+, x^-, b_x\}, \{a_x, b_c, c_x\},
\]
to the formula. It is easy to verify that exactly one of~$x^+,x^-$ must be
true in a $1$-in-$3$ satisfying assignment and that the resulting formula
$\phi'$ has size linear in~$|\phi|$. In conclusion, there exists a linear
reduction from~\Problem{3SAT} to \PosSAT which implies that under ETH,
\PosSAT cannot be solved in time~$2^{o(n)}(n+m)^{O(1)}$, where $n$ is the number of variables and $m$ is the number of clauses.
Using sparsification one can further show that the ETH excludes
algorithms for~\Problem{3SAT} with running time~$2^{o(m)}(n+m)^{O(1)}$
(see \eg the survey by Cygan~\etal~\cite{SETHSurvey}). The above
reduction implies the following lower bound: \looseness-1

\begin{proposition}\label{prop:possat-eth}
  Unless the ETH is false, \PosSAT cannot be solved in time~$2^{o(m)}(n+m)^{O(1)}$.
\end{proposition}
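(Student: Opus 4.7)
The plan is to compose the two reductions sketched immediately above the proposition and to verify that the composition is linear not only in total instance size but specifically in the number of clauses. The starting point is the standard consequence of the Sparsification Lemma of Impagliazzo, Paturi, and Zane cited via~\cite{SETHSurvey}: under ETH, \Problem{3SAT} admits no algorithm running in time $2^{o(m)}(n+m)^{O(1)}$. Equivalently, I may assume that the input \Problem{3SAT} instance is already sparsified so that $n = O(m)$ and the total formula size is $O(m)$.

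Now I would push this sparsified instance through Schaefer's reduction~\cite{1in3SatHardness} to obtain a \Problem{$1$-in-$3$SAT} formula with $n' = O(m)$ variables and $m' = O(m)$ clauses, since Schaefer's reduction is linear. Applying the literal-elimination gadget described in the excerpt, each variable $x$ of the intermediate formula is replaced by the five fresh variables $x^+, x^-, a_x, b_x, c_x$ and contributes exactly three new clauses
\[
  \{x^+, x^-, a_x\},\; \{x^+, x^-, b_x\},\; \{a_x, b_x, c_x\}.
\]
Hence the resulting \PosSAT formula has $5n' = O(m)$ variables and $m' + 3n' = O(m)$ clauses, and is satisfiable (as $1$-in-$3$) iff the sparsified \Problem{3SAT} instance is satisfiable, by the correctness argument sketched above the proposition.

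Suppose for contradiction that \PosSAT can be solved in time $2^{o(m_{\mathrm{pos}})}(n_{\mathrm{pos}} + m_{\mathrm{pos}})^{O(1)}$. Composing with the chain above yields an algorithm deciding the sparsified \Problem{3SAT} instance in time $2^{o(m)}(n+m)^{O(1)}$, contradicting ETH. The proof is essentially bookkeeping over two already-established reductions, so I do not anticipate a genuine obstacle; the one point that needs care is that the gadget adds clauses only proportionally to the \emph{variables} of the intermediate instance, which is harmless precisely because Schaefer's reduction is linear and we began from a sparsified \Problem{3SAT} formula, ensuring that variable count and clause count stay linearly related throughout.
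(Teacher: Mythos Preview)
Your proposal is correct and follows essentially the same approach as the paper: compose Schaefer's linear reduction with the literal-elimination gadget, and invoke the sparsification-based $2^{o(m)}$ lower bound for \Problem{3SAT} from~\cite{SETHSurvey}. The only cosmetic difference is ordering---you sparsify first and then reduce, whereas the paper describes the reduction first and then appeals to sparsification---but your explicit bookkeeping (tracking that the gadget adds clauses proportional to the \emph{variable} count, which stays $O(m)$ after sparsification) is exactly the care the paper leaves implicit.
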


\section{Algorithmic considerations}
\noindent
We start with a basic observation about the problems in question
with the smallest sensible depths of~$r=1$:

\begin{lemma}\label{lem:matching}
  The densest \half-shallow minor or $1$-subdivision on a given set
  of nails can be computed in polynomial time.
\end{lemma}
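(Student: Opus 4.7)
The plan is to reduce both variants to maximum bipartite matching. Fix the prescribed set of nails $N \subseteq V(G)$ and let $V' = V(G) \setminus N$ be the candidate subdivision vertices; since $|N|$ is fixed, maximising the density $\|H\|/|H|$ is equivalent to maximising $\|H\|$.

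First I would handle the $1$-subdivision case. Every edge of $H$ is realised as a length-$2$ path with its unique internal vertex in $V'$, and because distinct edges of $H$ must use internally vertex-disjoint paths, each vertex of $V'$ can serve at most one edge. Build a bipartite graph $B$ with parts $V'$ and $\binom{N}{2}$, joining $w \in V'$ to $\{u,v\}$ exactly when $w$ is adjacent in $G$ to both $u$ and $v$. Valid sets of edges of $H$ then correspond bijectively to matchings in $B$: a matched pair $(w, \{u,v\})$ encodes using $w$ to subdivide the edge $uv \in E(H)$, and two such pairs conflict iff they share a subdivision vertex. Hence a maximum matching in $B$, computable in polynomial time, yields $\|H\|_{\max}$.

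For the $\half$-STM case, an edge $uv$ of $H$ can be realised either directly (if $uv \in E(G)$) or by a length-$2$ path through some $w \in V'$. Every edge of $G[N]$ is essentially free: including it in $H$ costs no subdivision vertex. So optimal solutions include all of $E(G[N])$, and the remaining edges of $H$ require distinct subdivision vertices. These remaining edges are captured by a maximum matching in the restriction $B'$ of $B$ to pairs $\{u,v\}$ with $uv \notin E(G)$, giving $\|H\|_{\max} = |E(G[N])| + \mu(B')$.

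The only subtlety is a short exchange argument showing that, in the $\half$-STM variant, one never gains by routing an edge $uv$ with $uv \in E(G)$ through some $w \in V'$: such a substitution replaces a free edge of $H$ by one that consumes a vertex of $V'$, so reverting to the direct realisation can only free up resources to be used elsewhere. With this observation in hand the two reductions are clearly correct, and since bipartite maximum matching lies in $\P$, so does the computation of the densest $\half$-STM or densest $1$-subdivision on prescribed nails.
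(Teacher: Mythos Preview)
Your proof is correct and follows essentially the same approach as the paper: reduce to a maximum bipartite matching between candidate subdivision vertices $V(G)\setminus N$ and nail pairs $\binom{N}{2}$. The only cosmetic difference is in the $\half$-STM variant, where the paper first subdivides every edge of $G[N]$ (so each new degree-two vertex becomes the canonical witness for that direct edge) and then runs the identical matching construction, whereas you count the direct edges separately and restrict the matching to non-adjacent pairs---both clearly yield the same value.
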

\begin{proof}
  Assume we are to find the densest $1$-subdivision with nail set~$X$
  in a graph~$G$. We construct an auxilliary bipartite graph~$\hat G$ with
  vertex set~$V(G) \setminus X$ and~${X \choose 2}$ where the
  vertex~$v \in V(G)\setminus X$ is connected to~$xy \in {X \choose 2}$
  iff~$\{x,y\} \subseteq N(v)$ in~$G$, that is, if~$v$ can be contracted
  into the edges~$x,y$. Now simply note that a matching of cardinality~$\ell$
  in~$\hat G$ corresponds to a $1$-subdivision in~$G$ with~$\ell$ subdivisions.
  Finding a maximal matching in~$\hat G$ therefore provides us with the
  densest $1$-subdivision in~$G$ with nail set~$X$. The same proof works
  for~\half-shallow minors if we subdivide all edges existing inside~$X$
  and then construct~$\hat G$.
\end{proof}

\noindent
Consequently, \Problem{Dense $1$-SD} and~\Problem{Dense \half-STM} both admit a
simple~$2^n n^{O(1)}$-algorithm: we guess the nail set~$X$ and apply the
matching construction from Lemma~\ref{lem:matching}. For the same reasons,
both problems are in \XP~when parameterized by the number of nails.
We cannot hope for much better since for~$r=0$ and~$d \sim k^2$ we simply
recover the problem of finding a $k$-clique. Besides being~\W[1]-hard and thus
probably not in \FPT, \Problem{$k$-Clique} further does not admit algorithms
with running time~$f(k) n^{o(k)}$ unless the ETH fails~\cite{ETHCliqueLowerBnd}.

The approach of guessing the nail sets also fails for larger depths: knowing
the nails of a, say, $1$-shallow minors leaves us with the problem of
contracting paths of length two into~$X$, which cannot be represented as a
simple matching problem. The reduction presented in
Section~\ref{sec:tw-lower-bound} proves as a corollary
that~\Problem{Dense $1$-STM} remains \NP-hard when the nail set of the
densest minor is known.

Finally, as we will see in Section~\ref{sec:np-hardness-eth}, both problems
are already \NP-complete for very small densities~$d$, making them \paraNP-complete
under this parameterization. Therefore none of the input variables
will work well as a parameterization, and it is sensible to consider \emph{structural} parameters,
meaning
parameters derived from the input graph. A good contender for such parameters
are \emph{width measures} like tree-, path-, or cliquewidth. Indeed,
we can express the problem of finding a dense shallow minor or a
dense subdivision in \MSOii and apply variants of Courcelle's
theorem to obtain the following:

\begin{proposition}
  \Problem{Dense $r/2$-STM} and \Problem{Dense $r$-SD} are in
  \FPT~when parameterized by the treewidth of the input graph.
\end{proposition}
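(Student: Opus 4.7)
The plan is to express each problem in \MSOii---the fragment of monadic second-order logic that allows quantification over both vertex and edge sets---and then invoke an optimisation extension of Courcelle's theorem to obtain the FPT algorithm.

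The key step is to describe the model of an $r/2$-shallow topological minor (respectively, of an $r$-subdivision) in \MSOii by quantifying over a nail set $N \subseteq V(G)$ and an edge set $F \subseteq E(G)$ representing the edges used by the model. The formula $\varphi_r(N, F)$ asserts that the subgraph $(V(F), F)$ decomposes into internally vertex-disjoint paths satisfying: (i)~every path has both endpoints in $N$ and all internal vertices outside $N$; (ii)~every path has length at most $r+1$, respectively exactly $r+1$ in the subdivision variant; and (iii)~no two paths share the same pair of endpoints, so that each path witnesses a distinct edge of the minor $H$. Since $r$ is a fixed constant, condition~(ii) can be encoded by first-order quantifiers over the at most $r$ internal vertices of each path, keeping $\varphi_r$ of constant size.

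Given $\varphi_r$, the density of $H$ can be written as a ratio of affine combinations of cardinalities of \MSOii-definable sets. Writing $V' = V(F)$ (itself MSO-definable from $F$) and assuming $N \subseteq V'$ without loss of generality (isolated nails only decrease density), a direct counting argument on the path decomposition gives $\|H\| = |F| - |V'| + |N|$ while $|H| = |N|$. For target density $d = p/q$, the requirement $\|H\|/|H| \geq d$ becomes the linear inequality $q|F| - q|V'| + (q-p)|N| \geq 0$. The problem thus reduces to maximising an affine combination of set cardinalities over pairs $(N, F)$ satisfying $\varphi_r$ and checking whether the maximum is non-negative---precisely the setting handled by the optimisation extension of Courcelle's theorem due to Arnborg, Lagergren, and Seese. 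Since $\varphi_r$ has constant size, the resulting algorithm runs in time $f(\tw, r) \cdot n$ and establishes membership in \FPT~when parameterized by treewidth.

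The main obstacle is technical rather than conceptual: the formula $\varphi_r$ must be crafted carefully so that $F$ really decomposes into paths of the prescribed type. In particular, one must rule out cycles in $(V', F)$, paths whose two endpoints coincide (self-loops in $H$), and parallel paths between the same pair of nails (which would inflate $H$'s edge count beyond what the cardinality formula for $\|H\|$ records). All of these constraints are expressible in \MSOii, but they must be stated explicitly to ensure that the linear expression above genuinely counts $\|H\|$.
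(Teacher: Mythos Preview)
Your proposal is correct and takes essentially the same approach as the paper: encode the model of the shallow topological minor by quantifying over a nail set and an edge set in \MSOii, then invoke the Arnborg--Lagergren--Seese optimisation extension of Courcelle's theorem on a linear objective. The only (cosmetic) difference is how $\|H\|$ is extracted as a linear term---the paper introduces an auxiliary vertex set~$C$ containing at most one representative per path so that $|C|$ counts the edges of~$H$, whereas you obtain the same count via the arithmetic identity $\|H\| = |F| - |V'| + |N|$; both encodings are valid.
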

\begin{proof}
  We can express a model for an $r$-shallow minor in \MSOii as follows: it consists
  of a vertex-set~$W$ and an edge set~$F$, where $F$ induces a set of paths. We
  can further easily express that the paths formed by $F$ are a) of length at most~$r$,
  b) disjoint, and c) have endpoints in~$W$. Lastly, we demand that for every pair~$x,y \in W$
  there exists at most one path in~$F$ that has~$x$ and~$y$ as endpoints.

  From an optimization perspective, we can therefore express the \emph{feasible}
  solutions to \Problem{Dense $r/2$-STM} (and \Problem{Dense $r$-SD} with small
  modifications). In order to express our optimization goal, let us introduce one more
  set of vertices~$C$ with the property that every path induced by~$F$ contains
  at most one vertex from~$C$---for example, we can express in \MSOii that vertices
  of~$C$ are not pairwise reachable via the graph induced by~$F$. With this auxilliary
  set, the density of the resulting minor is at least~$|C|/|W|$ and exactly the
  density if~$C$ is maximial with respect to our choice of~$F$. Accordingly,
  we find that there exists an $r$-shallow topological minor of density at least
  $d$ if $|C| - d|W| \geq 0$. This constraint and the aforementioned \MSOii-description
  of a minor fall within the expressive power of the EMSO-framework introduced
  by Arnborg, Lagergren, and Seese~\cite{EMSO} and we conclude that both
  \Problem{Dense $r/2$-STM} and \Problem{Dense $r$-SD} are fpt when parameterized
  by treewidth.
\end{proof}

\noindent
Furthermore, it is not difficult (albeit tedious) to design a dynamic
programming algorithm that solves \Problem{Dense $r/2$-STM} and
\Problem{Dense $r$-SD} in time~$2^{O(\tw^2)} n$. The quadratic dependence on
the treewidth stems from the fact that we have to keep track of which edges we
have contracted so far and there is no obvious way to circumvent this. The
important question was whether any of the known techniques to reduce the
complexity of connectivity-problems~\cite{CutAndCount,SingleExpTWDet,SingleExpTWLogic}
could be applied here. The answer is, to our surprise, negative as we will discuss in
Section~\ref{sec:tw-lower-bound}.

\section{Hardness results}
\subsection{\NP-hardness and ETH lower bounds}\label{sec:np-hardness-eth}
\noindent
This section will be dedicated to the proof the following theorems which
both follow directly via a linear reduction from~\PosSAT.

\begin{theorem}\label{thm:np-hardness}
	\Problem{Dense $r/2$-STM} and \Problem{Dense $r$-SD} are \NP-hard
	for~$r \geq 1$, even when restricted to graphs that can be turned into
	subcubic planar graphs by deleting a single vertex.
\end{theorem}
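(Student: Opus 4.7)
The plan is to give a polynomial-time (in fact, linear) reduction from \PosSAT restricted to planar formulas, which is \NP-hard by Mulzer and Rote~\cite{Planar1in3Sat}. Given a planar 1-in-3 instance~$\phi$ with $n$ variables and $m$ clauses, I would construct a graph~$G_\phi$ and a rational density threshold~$d$ such that $G_\phi$ contains an $r/2$-shallow topological minor (respectively, an $r$-subdivision) of density at least~$d$ if and only if $\phi$ admits a 1-in-3 satisfying assignment. The graph will have a single apex vertex~$v^\ast$ whose removal leaves a subcubic planar graph; $v^\ast$ serves as the common attachment point linking all variable and clause gadgets and simultaneously absorbs the non-planar wiring that cannot be realized in a subcubic planar manner.

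For the case $r = 1$, I would introduce a variable gadget for each~$x_i$ consisting of a small constant-size component with a distinguished pair of potential nails (a ``true'' nail~$T_i$ and a ``false'' nail~$F_i$), together with subdivision vertices that can be smoothed into edges between one of these nails and~$v^\ast$. For each clause~$c_j$ with variables~$x_a, x_b, x_c$, I would place a clause nail~$u_j$ with three degree-$2$ vertices~$s_{j,a}, s_{j,b}, s_{j,c}$, each adjacent to~$u_j$ and to the corresponding true nail of the variable it represents. Thus, each such subdivision vertex can be smoothed into an edge~$u_j T_i$ in~$H$ exactly when~$T_i$ is selected as a nail. The density threshold~$d$ is calibrated so that it is met only if every clause nail contributes exactly one such incidence edge, which corresponds precisely to the 1-in-3 condition.

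For the forward direction, I would exhibit, given a satisfying assignment, an explicit choice of nails and smoothed edges realizing density~$d$. For the backward direction, which is the main technical step, I would show that any model of density at least~$d$ must select, for each variable, exactly one of~$T_i$ or~$F_i$ as a nail, and that each clause nail is used by exactly one incidence. The key inequality balances the cost of adding nails (which increases the denominator of~$\|H\|/|H|$) against the gain of adding subdivided edges; the gadget should be calibrated so that using two or more true variables in the same clause wastes subdivision vertices without producing a matching edge, while using zero true variables leaves the clause nail contributing nothing.

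To extend the reduction to larger~$r$, I would subdivide every edge of the gadget exactly $r-1$ additional times; by inspection, every model in the padded graph corresponds to a model in the original construction, preserving the correspondence with 1-in-3 assignments. Since the construction yields $|G_\phi| = O(n+m)$, combining with Proposition~\ref{prop:possat-eth} also gives the desired ETH lower bound announced in the abstract. The main obstacle I anticipate is calibrating the gadgets so that the ``exactly one'' semantics of 1-in-3 SAT is enforced by \emph{density} rather than by a local combinatorial constraint; in particular, preventing phantom contributions that cheat a clause into appearing satisfied by more or fewer than one variable will require the delicate density-budget arithmetic typical for 1-in-3 reductions, here further constrained by subcubic degrees and planarity outside~$v^\ast$.
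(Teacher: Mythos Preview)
Your high-level plan matches the paper: reduce from planar \PosSAT, use a single apex so that its removal leaves a subcubic planar graph, and calibrate a density threshold that is met exactly by $1$-in-$3$ assignments. The gap is in the gadget design.

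In your sketch the clause vertex~$u_j$ is a \emph{nail}, and each of the three degree-$2$ vertices~$s_{j,a},s_{j,b},s_{j,c}$ can independently be smoothed into an edge~$u_jT_a$, $u_jT_b$, $u_jT_c$ as soon as the corresponding~$T_i$ is chosen as a nail. Nothing in this picture penalizes satisfying a clause with two or three variables: more true variables simply give~$u_j$ more incident edges, which \emph{increases} density. So the construction as described encodes ordinary \textsc{SAT}-type satisfaction, not $1$-in-$3$. You flag this yourself as the ``main obstacle,'' but it is not a calibration issue to be tuned later---it is a structural defect of the gadget.

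The paper's mechanism is different and worth noting: the clause vertex is \emph{not} a nail but a degree-$3$ vertex that must itself be smoothed. Since smoothing requires degree exactly two in the surviving model, exactly one of its three variable-side neighbours must be deleted, and the density budget is set so that \emph{every} clause vertex must be smoothed in an optimal solution. The variable gadgets are cycles (one vertex per clause occurrence) rather than a $T_i/F_i$ pair, and setting a variable true corresponds to deleting its entire cycle. The backward direction then hinges on a delicate counting argument (degree-$4$ versus degree-$3$ white vertices) rather than a local budget per clause. Your proposed extension to larger~$r$ by uniformly subdividing every edge $r-1$ times is also too coarse; the paper needs separate constructions for odd and even~$r$ because the parity of path lengths constrains which vertices can serve as subdivision vertices.
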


\begin{theorem}\label{thm:eth-lowerbound}
	\Problem{Dense $r/2$-STM} and \Problem{Dense $r$-SD} cannot be
	solved in time $2^{o(n)} n^{O(1)}$ on bipartite graphs unless the ETH fails.
\end{theorem}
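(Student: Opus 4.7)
The plan is to obtain Theorem~\ref{thm:eth-lowerbound} as a refinement of the reduction used for Theorem~\ref{thm:np-hardness}: produce a \emph{linear} reduction from \PosSAT whose output graph is bipartite, so that combining it with Proposition~\ref{prop:possat-eth} immediately yields the ETH lower bound. Concretely, given a \PosSAT instance~$\phi$ with~$n$ variables and~$m$ clauses, we build a graph~$G_\phi$ with~$|V(G_\phi)| = O(n+m)$ vertices and a density threshold~$d$ such that~$\phi$ has a $1$-in-$3$ satisfying assignment iff~$G_\phi$ contains an $r/2$-shallow topological minor (respectively, an $r$-subdivision) of density at least~$d$. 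A~$2^{o(|V(G_\phi)|)}$ algorithm for the minor problem would then solve \PosSAT in time~$2^{o(n+m)}(n+m)^{O(1)}$, contradicting the ETH by Proposition~\ref{prop:possat-eth}.

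The core reduction is the same gadget construction used for Theorem~\ref{thm:np-hardness}: one variable gadget per variable which admits two ``states'' (true/false) realized by two distinct ways of routing contractions along nails, and one clause gadget per clause whose potential contribution to the density is maximal iff exactly one of its three incident variable gadgets is in the true state. What I need to check, compared to Theorem~\ref{thm:np-hardness}, is that bipartiteness can be ensured without destroying linearity or the density budget. For the~$r$-subdivision versions this is essentially free: an~$r$-subdivision model only exists for graphs that \emph{are} sufficiently subdivided, and the gadgets already consist of long paths of subdivision vertices glued at nails, so after substituting each non-bipartite edge of the template by a length-two path (adding one auxiliary subdivision vertex) the resulting~$G_\phi$ is bipartite while still having~$O(n+m)$ vertices. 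For the~$r/2$-shallow topological minor versions I would design the gadgets so that every auxiliary cycle has even length from the outset; this is straightforward because the gadgets communicate only through single nails and paths.

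I then pick the target density~$d$ of the form~$A/B$ with~$A,B = O(n+m)$, such that a satisfying assignment produces a model of density exactly~$A/B$ and every assignment violating the $1$-in-$3$ constraint loses at least one unit of ``contribution'' per violated clause, driving the density strictly below~$d$. The correctness argument mirrors the one for Theorem~\ref{thm:np-hardness}: in the forward direction, a satisfying assignment~$\tau$ is translated into an explicit choice of nails and subdivision routes that attains the claimed density; in the backward direction, any sufficiently dense model must select each variable gadget in one of its two canonical states (otherwise the nail/subdivision count already falls short) and can activate each clause gadget only when exactly one of its variables is true.

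The main obstacle is the last step: showing that \emph{no} clever non-canonical contraction pattern inside a gadget can compensate for violating $1$-in-$3$ in a neighbouring clause. Here I would prove a local ``rigidity'' lemma stating that in every gadget the density of any partial model is strictly maximised by the two canonical states, with a quantifiable gap that exceeds the density gain any clause gadget could donate. Provided this rigidity holds, the reduction is linear (gadgets have constant size, with subdivision paths of length~$\Theta(r)$), bipartite, and faithful, and Proposition~\ref{prop:possat-eth} closes the argument.
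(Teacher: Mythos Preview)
Your plan is essentially the paper's: reuse the linear reduction behind Theorem~\ref{thm:np-hardness} and invoke Proposition~\ref{prop:possat-eth}. One simplification you are missing is that for odd~$r$ the constructed graph is \emph{already} bipartite (white vertices together with the apex on one side, black and gray vertices on the other), so no auxiliary subdivision is needed; your proposed fix of inserting a single subdivision vertex per offending edge would in fact be risky for \Problem{Dense $r$-SD}, since it perturbs path lengths and can destroy the exact-$r$ structure the gadgets rely on. The ``rigidity lemma'' you anticipate is carried out in the paper by an explicit density-counting argument (Lemmas~\ref{lem:nablahalf-sat-density} and~\ref{lem:nablahalf-sat-correctness}) showing that a model of density~$\frac{5m}{2m+1}$ forces every clause vertex to be smoothed and every variable cycle to be either fully present or fully absent.
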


\noindent
A special case of our result might be of independent interest:

\begin{theorem}\label{thm:dense-bipartite-subdiv}
	\Problem{Dense Bipartite Subdivision} is \NP-hard even on instances
	$\big((X,Y,E),d\big)$ where vertices in~$X$ have degree at most~$3$ and
	$d \geq 3$.
\end{theorem}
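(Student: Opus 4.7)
The plan is a linear reduction from \PosSAT. Given a positive $3$-CNF formula $\phi$ with variables $v_1,\dots,v_n$ and clauses $C_1,\dots,C_m$, I would build a bipartite graph $(X,Y,E)$ in which the subdivision vertices (the degree-$\leq 3$ side) live in $X$ and the nails live in $Y$, set the density target $d=3$, and show that a dense bipartite subdivision exists iff $\phi$ has a $1$-in-$3$-satisfying assignment.

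The construction uses two kinds of gadgets. A \emph{variable gadget} for $v$ consists of a small $Y$-cluster together with degree-$\leq 3$ $X$-vertices whose possible smoothings split into two ``groups'' of pair-edges, one labelled \emph{true} and one \emph{false}; selecting either group contributes the same local density, while mixing the two wastes $Y$-vertices without a compensating gain in $X'$. A \emph{clause gadget} for $C_j=\{a,b,c\}$ consists of $Y$-vertices shared with the three relevant variable gadgets together with degree-$3$ $X$-vertices that can be smoothed into \emph{unique} pair-edges only when exactly one of the variable gadgets for $a,b,c$ is in its \emph{true} group---the bonus edges in the clause gadget collide (and hence are not simultaneously usable) whenever two or more of $a,b,c$ are true, and are not usable at all whenever none of them are.

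The density target $d=3$ is calibrated so that every variable gadget contributes its share independent of the $T/F$ choice, and every clause gadget reaches its share only under the exactly-one-true configuration. Summing over all gadgets, the densest bipartite subdivision has value $\geq 3$ iff every clause is in this configuration, i.e., iff $\phi$ is $1$-in-$3$-satisfiable. Since every $X$-vertex is designed to have at most three $Y$-neighbours and each gadget has constant size, the reduction is linear in $|\phi|$ and yields an instance in the promised class.

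The main obstacle will be designing the clause gadget so that its bonus pair-edges are simultaneously \emph{unique} (no two $X$-vertices collide into the same pair in $Y'$) \emph{precisely} under the exactly-one-true configuration, and to rule out sneaky alternative choices of $(X',Y')$---for instance, concentrating entirely inside a single variable gadget, or picking fragments of many gadgets---that might inflate the density above $3$ without corresponding to any assignment. This calls for an amortised accounting in which each chosen $Y$-vertex must be ``paid for'' by exactly three globally compatible $X$-vertices, verified by a short case analysis on the structure of $Y'$ restricted to each gadget.
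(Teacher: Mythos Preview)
Your proposal is a plan rather than a proof: the variable and clause gadgets are never actually built, and you yourself flag their design as the ``main obstacle.'' The paper takes a quite different and fully concrete route that sidesteps this obstacle.

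The paper does not design separate true/false variable gadgets at all. It simply reuses the reduction for \Problem{Dense \half-STM} in the odd case and observes that the resulting graph is bipartite with the required degree bound. Concretely: each variable~$x_i$ becomes a cycle~$D_i$ of length equal to its frequency; a single apex~$a$ is joined to every cycle vertex; each clause becomes a single degree-$3$ vertex~$u_{ijk}$ joined to one vertex in each of its three cycles; then every cycle edge and every apex edge is subdivided once. The side~$Y$ consists of the white cycle vertices together with~$a$, and~$X$ consists of the black subdivision vertices (degree~$2$) and the gray clause vertices (degree~$3$). The assignment is encoded not by a local T/F switch but by \emph{which cycles are deleted} (true) versus kept (false), and the target density is~$\frac{5m}{2m+1}$.

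The apex is exactly the device that replaces your amortised accounting. Because~$a$ is the only vertex whose degree can exceed four, any sufficiently dense~$H$ must include it; from there a global counting argument (tracking the fraction~$\alpha$ of degree-four white nails) forces all~$m$ gray vertices to be smoothed and exactly~$2m$ white vertices to survive, which is precisely a $1$-in-$3$ assignment. This is how the paper ``rules out sneaky alternative choices of~$(X',Y')$'' without any per-gadget analysis. Your modular scheme, by contrast, has a structural tension you would need to resolve before it can work: if each variable gadget already ``contributes its share'' of density~$3$ regardless of the T/F choice, then a solution using only variable gadgets and ignoring all clauses already meets the target, so the instance is trivially \YES; if instead variable gadgets fall strictly below~$3$ and clause gadgets must pull the average up, then clause gadgets must locally exceed density~$3$ while every $X$-vertex has degree at most~$3$ and all smoothed edges are distinct---and you must also prevent the adversary from cherry-picking a single clause gadget in isolation. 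None of this is obviously impossible, but it is real work that your proposal does not do, and the paper's apex trick avoids it entirely.
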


\noindent
In the following, we
present two reduction from \PosSAT that depend on the parity
of~$r$. We describe the reduction for~$r \in \{1,2\}$ and then argue
how to modify the construction for arbitrary values of~$r$.
Note that the resulting instances are such that the densest
graph~$H$ that appears as a $r/2$-shallow topological minor
appears, in fact, as an $r$-subdivision and thus the reductions
work for both problems.

%
%
\paragraph*{Reduction for $r$ odd}

\begin{figure}
	\centering\includegraphics[width=.9\textwidth]{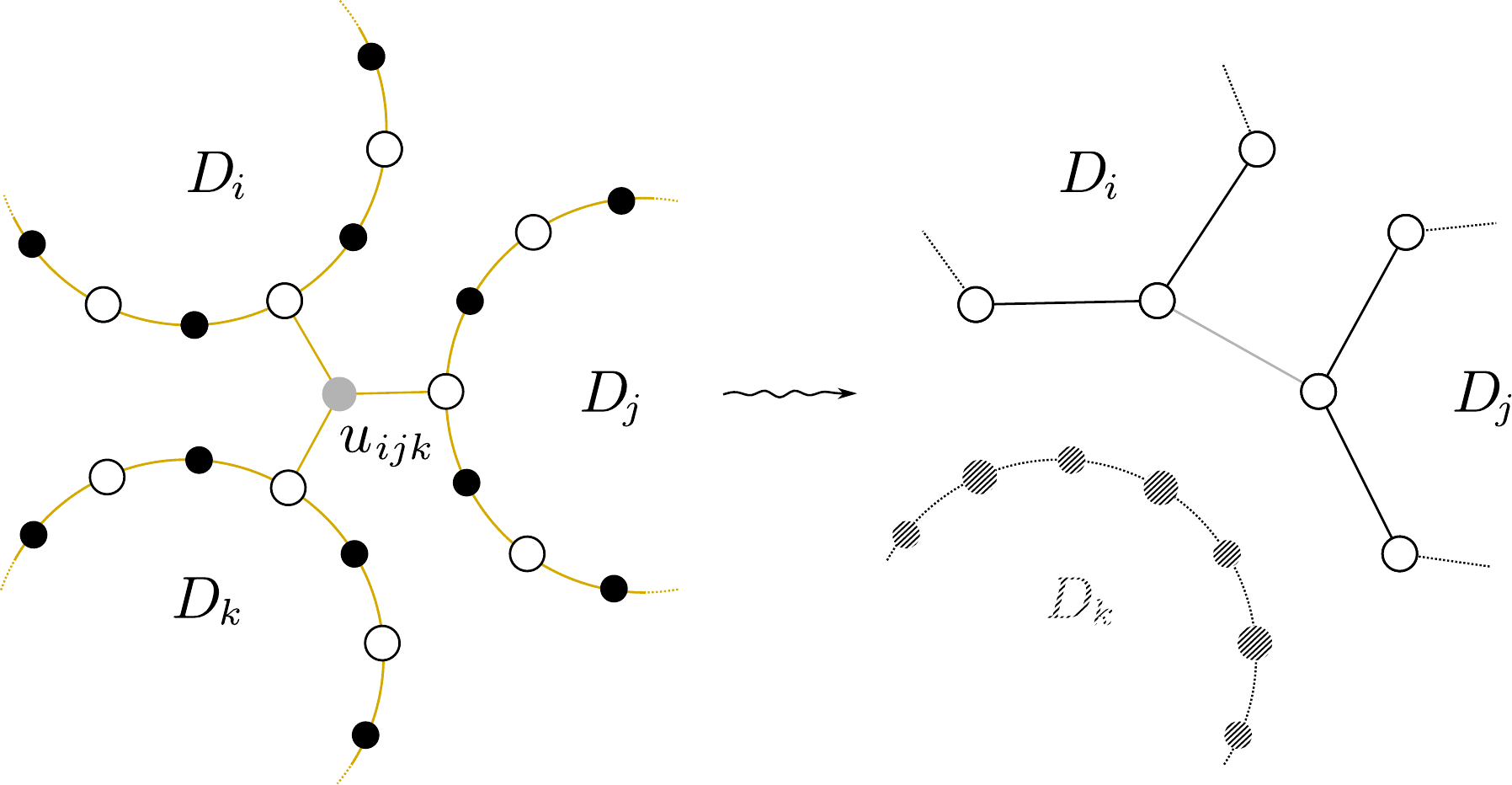}
	\caption{\label{fig:reductionA}%
		Three variable gadgets~$D_i$, $D_j$, $D_j$ connected by the gray
		clause vertex~$u_{ijk}$. Setting variable~$x_k$ to true and~$x_i$, $x_j$
		to false corresponds to the contractions on the right-hand side.
		The apex vertex~$a$ is not shown.
	}
\end{figure}

Let $\psi$ be a
\PosSAT instance with clauses $C_1,\dots C_m$ and variables
$x_1,\dots x_p$. We assume that every variable in $\psi$ appears in at
least 3 clauses; if not, we can duplicate clauses to achieve this without
changing the satisfiability of $\psi$. We construct a graph $G$ from~$\psi$
in the following manner (\cf Figure~\ref{fig:reductionA}):

\begin{enumerate}
	\item For each variable~$x_i$, create a cycle $D_i$ with as many
			  vertices as the frequency of~$x_i$ in $\psi$.
	\item Create an apex-vertex $a$ that is connected to every vertex of
				the cycles~$D_1,\ldots,D_p$.
	\item For each clause $\{x_i, x_j, x_k\}$, add a vertex~$u_{ijk}$
				to the graph and connect it to one vertex in $D_i,D_j,D_k$ each that
				has not yet been connected to any clause-vertex.
	\item Subdivide every edge appearing in the cycles~$D_1,\ldots,D_p$
			  and all edges incident to the apex~$a$.
\end{enumerate}

\noindent
For easier presentation, let us color the vertices of~$G$ as follows:
the vertices introduced in the first step are \emph{white}, the vertices
introduced in the third step \emph{gray}, and the subdivision vertices
created in the last step \emph{black} (the apex vertex~$a$ remains uncolored).
Note that the graph~$G$ is bipartite, where one side of the partition contains
exactly the white vertices and $a$. \looseness-1

Note that if the input formula~$\psi$ is planar, then the constructed graph
is planar and subcubic after removing the apex vertex~$a$.

\begin{lemma}\label{lem:nablahalf-sat-density}
	If $\psi$ is satisfiable then $G$ has a topological minor at depth $\half$
	of density $\frac{5m}{2m+1}$.
\end{lemma}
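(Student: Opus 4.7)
The proof I have in mind fixes a $1$-in-$3$ satisfying assignment of~$\psi$ and exhibits a concrete model. Somewhat counterintuitively, the nails should correspond to the \emph{false}-variable cycles: I would set $N = \{a\} \cup \bigcup_{x_j \text{ false}} V(D_j)$. Since each of the $m$ clauses contains exactly two false variables under a $1$-in-$3$ assignment, the total number of white vertices lying in false-variable cycles is~$2m$, so $|N| = 2m+1$; I will refer to these $2m$ vertices as \emph{false whites}.

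To specify the model of~$H$ I would define three internally-disjoint families of length-$2$ paths in~$G$. First, for every false white~$w$, take the apex path running from $a$ to $w$ through the unique black subdivision vertex between $a$ and~$w$; this gives $2m$ edges. Second, for every false variable $x_j$ and every cycle edge~$ww'$ of $D_j$, take the path running from $w$ to $w'$ through the corresponding black cycle subdivision vertex; because $x_j$ is false both endpoints lie in~$N$, so this contributes $\sum_{x_j \text{ false}} f_j = 2m$ further edges. Third, for every clause~$u_{ijk}$ with true variable~$x_k$, smooth the clause vertex into the path $w_i^{ijk} - u_{ijk} - w_j^{ijk}$ between its two false-variable whites, giving~$m$ more edges.

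Checking validity is then routine: the internal vertices of the three families are of three different kinds (apex-black vertices, cycle-black vertices, and clause vertices), and within each family distinct paths use distinct internal vertices, so all paths are pairwise internally vertex-disjoint. Summing yields $|E(H)| = 2m + 2m + m = 5m$ and therefore density $\frac{5m}{2m+1}$ as claimed. The main conceptual step is spotting that the nails should track \emph{false} variables rather than true ones: a model built on true whites would muster only $3m$ edges against $2m+1$ nails, whereas using false whites doubles the cycle-edge contribution and exactly fills the available vertex budget.
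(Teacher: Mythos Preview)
Your proof is correct and is essentially the paper's construction spelled out explicitly: the paper phrases it as ``smooth all black vertices, delete the cycles of true variables, then smooth the gray vertices'', which yields exactly your nail set $\{a\}\cup\bigcup_{x_j\text{ false}} V(D_j)$ and your three families of length-$2$ paths, with the same $2m+2m+m$ edge count. (Your closing intuition about a ``true whites'' model has slightly off numbers---there would be only $m+1$ nails, not $2m+1$---but this aside does not affect the proof.)
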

\begin{proof}
	We construct the minor $H$ by first smoothing each black vertex. Then, for
	each variable set to true, we delete the corresponding cycle $D_i$. Since
	$\psi$ is satisfiable and each clause has exactly one variable set to
	true, this step deletes exactly one neighbor from each gray vertex. We
	complete the construction of $H$ by smoothing out each gray vertex.

	$V(H)$ consists of exactly two vertices corresponding to each clause plus
	the apex $a$, for a total of $2m+1$ vertices. Since all vertices of $H$ were colored
	white in $G$, $a$ has degree $2m$. Aside from the edges incident to $a$,
	there are $m$ edges from smoothing gray vertices and $2m$ edges from
	smoothing black vertices, which yield a total of $5m$ edges. Thus, we have
	found a minor at depth $\half$ of density $\frac{5m}{2m+1}$.
\end{proof}

\begin{lemma}\label{lem:nablahalf-sat-correctness}
	If $G$ has a topological minor at depth $\half$ of density at least $\frac{5m}{2m+1}$,
	then the formula $\psi$ is satisfiable.
\end{lemma}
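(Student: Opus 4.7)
The plan is to start from an arbitrary model of a $\half$-STM $H$ of $G$ with density at least $5m/(2m+1)$ and force it to coincide, up to which variable cycles are selected, with the construction in the proof of Lemma~\ref{lem:nablahalf-sat-density}. A $1$-in-$3$ satisfying assignment can then be recovered by declaring $x_i$ true precisely when the cycle $D_i$ contributes no nails. First I would reduce to \emph{canonical} models. If the apex $a$ is not a nail, then every nail has degree at most three in $H$, since the only length-$\le 2$ realisations available go via a cycle-black, a clause vertex, or a direct white--clause edge; this bounds the density by $3/2 < 5m/(2m+1)$, contradiction. Any black vertex used as a nail can either be discarded or repurposed as a subdivision vertex without decreasing the density (the target density exceeds~$2$), so without loss of generality $a$ is the only non-(white or clause) nail.

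Let $W' = N \cap W$ and $U' = N \cap U$ denote the white and clause-vertex nails, set $n_W = |W'|$, and let $\alpha_i$ be the number of white nails on the cycle $D_i$ of length $f_i$. Every edge of $H$ is realised by a path of length at most two in $G$, and inspection of the local structure leaves only four possibilities: (a)~an $(a,w)$-edge through the unique black on $w$'s apex-subdivision; (b)~an edge between two cyclically adjacent nailed whites of some $D_i$ via the intervening black; (c)~an edge between two whites of a common clause $C$ via a non-nailed $u_C$; and (d)~a direct edge between a nailed white and a nailed $u_C$. A short argument on cyclic arrangements gives at most $\alpha_i - \Iver{0 < \alpha_i < f_i}$ type-(b) edges inside $D_i$, so with $s_C := |N_G(u_C) \cap W'|$,
\[
\|H\| \;\leq\; n_W + (n_W - p) + \sum_{u_C \notin U'} \Iver{s_C \geq 2} + \sum_{u_C \in U'} s_C,
\]
where $p = |\{i : 0 < \alpha_i < f_i\}|$.

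Next I would turn this into a linear inequality. Partitioning clauses by $s_C \in \{0,1,2,3\}$ and by whether $u_C$ is nailed produces counts $m_s$ and $m'_s$, and the identities $\sum_s (m_s + m'_s) = m$, $\sum_s s(m_s + m'_s) = n_W$, $|U'| = \sum_s m'_s$, and $n_H = 1 + n_W + |U'|$ let one rearrange the above bound into
\[
\|H\|(2m+1) - 5m\,n_H \;\leq\; -5 m_0 - (m+3)m_1 - (m-2)m_3 - (2m+1)p - \sum_{s=0}^{3} c^{(m)}_s m'_s,
\]
with each coefficient $c^{(m)}_s$ strictly positive for $m \geq 3$. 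The density hypothesis makes the left-hand side non-negative, so every term on the right must vanish: $m_0 = m_1 = m_3 = 0$, $p = 0$, and $|U'| = 0$. Consequently each clause has exactly two of its three whites nailed and each cycle is either fully nailed ($\alpha_i = f_i$) or entirely empty ($\alpha_i = 0$). Setting $x_i$ true iff $\alpha_i = 0$ yields a consistent truth assignment with exactly one true variable per clause, i.e.\ a $1$-in-$3$ satisfying assignment of $\psi$; the handful of instances with $m \leq 2$ are dispatched by direct inspection.

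The main obstacle is the algebraic bookkeeping in that last inequality: one must verify that each possible deviation from the construction of Lemma~\ref{lem:nablahalf-sat-density}---a partially nailed cycle, a clause with $0$, $1$ or $3$ nailed whites, or any nailed clause vertex---contributes a term with a strictly negative coefficient, so that the density lower bound simultaneously forces every such deviation to vanish.
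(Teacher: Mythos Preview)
Your proposal is correct and follows a genuinely different route from the paper's argument. The paper first eliminates gray (clause) vertices from the nail set by a local density-improvement move (smooth a gray nail into an edge between two of its white neighbours), and only then performs a degree-count on the remaining white nails: it introduces $v_3,v_4$, the ratio $\alpha=v_4/w$, writes the density as $(2+\alpha/2)\,w/(w+1)$, and uses a short case analysis on $\beta=w/m$ to force $\alpha=1$ and $w=2m$. The structural conclusion (each cycle is fully present or fully absent; each clause contributes exactly two whites) is then read off.

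You instead keep clause vertices as potential nails and fold \emph{every} possible deviation into a single linear inequality. After classifying the length-$\le 2$ realisations into types (a)--(d) and bounding each, your key step is the identity
\[
(2m+1)\|H\|-5m\,n_H \;\le\; -5m_0-(m{+}3)m_1-(m{-}2)m_3-(2m{+}1)p-\textstyle\sum_s c^{(m)}_s m'_s,
\]
with every coefficient strictly negative for $m\ge 3$ (and the $m_2$-coefficient zero). The density hypothesis makes the left side non-negative, so all deviation counts vanish at once: $p=0$, $m_0=m_1=m_3=0$, and $|U'|=0$. This is a clean linear-programming-style tightening that replaces the paper's two-stage ``eliminate gray nails, then case-split on $\alpha,\beta$'' argument by one bookkeeping computation. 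What you gain is uniformity (no separate smoothing argument for gray nails, no case analysis); what the paper's route gains is a slightly more transparent explanation of \emph{why} gray nails are suboptimal before any counting begins. Both arrive at the same endpoint, and your reliance on $m\ge 3$ matches the paper's standing assumption that every variable occurs in at least three clauses.
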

\begin{proof}
	Let $H$ be the densest shallow topological minor at depth $\half$ and
	fix some model of~$H$ in~$G$. We first argue
	that the nails of $H$ consist only of white vertex and potentially the apex
	vertex~$a$.

	\begin{claim}
		The nails of $H$ consist of the apex vertex~$a$ and some subset of white
		vertices.
 	\end{claim}
	\noindent
	First, since the density of $H$ is greater than two, its minimum degree is at
	least three (the removal of a degree-two vertex would increase the density).
	Since black vertices have degree two in $G$, the nails of the model forming $H$
	therefore cannot be black. Accordingly, every black vertex either does
	not participate in the formation of~$H$ or it is smoothed into an edge.

	Let us define~$G_b$ to be the graph obtained from~$G$ by smoothing all black
	vertices. Since black vertices have degree exactly two, this operation is uniquely
	defined. By the previous observation, $H$ can be obtained from~$G_b$ by only
	smoothing gray vertices and taking a subgraph. This of course implies that the
	nails of~$H$ are all either gray, white, or the apex vertex~$a$. Let us now exclude
	the first of these three cases: assume $y$ is a gray nail of~$H$ in~$G_b$. Again,
	the degree of~$y$ in~$H$ must be at least three to ensure maximal density of~$H$,
	and since~$y$ has degree three in~$G$ it must also have degree exactly three in~$H$.
	Note that the three neighbors of~$y$ are necessarily white and independent in~$G_b$,
	thus we can smooth~$y$ into an (arbitrary) edge between two of its neighbors.
	The newly obtained graph~$H'$ is again a half-shallow topological minor of~$G$ and
	it contains one vertex and two edges less than~$H$. Since the density of~$H$
	is greater than two, this implies that the density of~$H'$ is greater than that
	of~$H$, a contradiction. We conclude that the nails of~$H$ cannot be gray
	and therefore only consist of white vertices and, potentially, the apex vertex~$a$.
	To see that $a$ must be contained in $H$, simply note that otherwise the maximum
	degree of~$H$ would be three and as thus $H$s density would lie strictly below
	the assumed~$\frac{5m}{2m+1}$. In summary: $H$'s nails consist of the apex
	vertex $a$ and some subset of white vertices of~$G$, proving the claim.
	\smallskip

	Since the white vertices in $G$ are independent, the above claim further implies that
	the construction of~$H$ can be accomplished without smoothing white vertices. We can
	therefore divide said construction into two steps: first we smooth all gray
	and black vertices to construct a graph~$G_{gb}$ from~$G$ and then we take
	the subgraph~$H \subseteq G_{gb}$. In the following, we will refer to edges
	in $G_{gb}$ or~$H$ as \emph{gray} if they originated from smoothing a gray
	vertex and \emph{black} if they originated from smoothing a black vertex. Note that
	the set of black and gray edges partition~$E(G_{gb})$ and hence also~$E(H)$.

	We now denote by~$v_4$ the number of degree-four vertices in~$H$ and
	by~$v_3$ the number of degree-three vertices (as observed above, no vertex
	with degree lower than three can exist in $H$ and $a$ is the only vertex of degree greater than four). Since
	the number of gray edges is at most~$m$ and a degree-four vertex must be
	incident to a gray edge, we have that~$v_4 \leq 2m$. Let
  	$w = v_3 + v_4$ be the number of white vertices in~$H$ and~$\alpha = v_4/w$
  	the ratio of degree-four vertices among them. Using these quantities, we can express $H$'s density as
	\[
		\frac{2 v_3 + \fivehalf v_4}{v_3 + v_4 + 1}
		=  2\frac{w}{w+1} + \frac{\alpha}{2} \frac{w}{w+1}
		= \Big(2 + \frac{\alpha}{2} \Big) \frac{w}{w+1}
	\]
	which we combine with the density-requirement on~$H$ to obtain
	\begin{gather*}
	  \Big(2 + \frac{\alpha}{2} \Big) \frac{w}{w+1} \geq \frac{5m}{2m+1}
		\iff \alpha \geq 2\Big( \frac{5m}{2m+1} \frac{w+1}{w}\Big) - 4.
	\end{gather*}
	Note that the right-hand side is equal to one for~$w = 2m$, smaller than one
	for~$w > 2m$, and larger than one for~$w < 2m$. This last regime would imply
	the impossible~$\alpha > 1$ and we conclude $2m \leq w \leq 3m$, where the upper
	bound~$3m$ is simply the total number of white vertices in~$G$. Rewriting $w$
	as $\beta m$ for $2 \leq \beta \leq 3$, we revisit the density-constraint on~$H$:
	\begin{align*}
		\Big(2 + \frac{\alpha}{2} \Big) \frac{\beta m}{\beta m+1}  \geq \frac{5m}{2m+1}
		&\iff \big(2\beta + \frac{\alpha \beta}{2}\big) (2m+1) \geq 5 (\beta m+1) \\
		&\implies (\alpha - 1) m + \frac{5}{2} \geq \frac{5}{\beta}.  \tag{$\star$}
	\end{align*}
	We will now show that $\alpha$, the fraction of degree-four vertices among all $w$ white
	vertices, needs to be one in order for~($\star$) to hold.
	To that end, we distinguish the following two cases:
	\begin{case}[$\beta = 2$]
		Assuming $\alpha \neq 1$, the largest possible value for~$\alpha$ is achieved
		when~$v_4 = 2m - 2$ (the case of exactly one gray edge missing from~$H$),
		resulting in~$\alpha = (2m-2) / 2m = 1 - \frac{1}{m}$.
		Plugging this value of~$\alpha$ and~$\beta = 2$ into~($\star$),
		we obtain that
		\[
			(1 - 1/m - 1) m + \frac{5}{2} = -1 + \frac{5}{2} \geq \frac{5}{2},
		\]
		a contradiction. Smaller values of~$\alpha$ lead to the same contradiction, and we conclude
		that necessarily $\alpha = 1$.
	\end{case}
	\begin{case}[$2 < \beta \leq 3$]
		Assuming $\alpha \neq 1$, the largest possible value for~$\alpha$ is achieved
		when~$v_4 = 2m$, resulting in $\alpha = 2m/\beta m = \frac{2}{\beta}$. Now ($\star$) becomes
		\[
			\Big(\frac{2}{\beta}-1\Big)m + \frac{5}{2} \geq \frac{5}{\beta}
			\iff m \leq \frac{10 - 5\beta}{2\beta} \cdot \frac{\beta}{2-\beta}
					= \frac{5}{2}.
		\]
		Thus for formulas~$\psi$ with at least three clauses, we arrive at a contradiction
		and conclude that~$\alpha = 1$.
	\end{case}

	\noindent
	We have now shown that a) $H$ contains \emph{only} vertices of degree four and b)
	that $|H| \geq 2m$. Since there cannot be more than~$2m$ vertices of degree four,
	we conclude that~$H$ has exactly $2m$ vertices. Note that therefore $H$ must consist
	of a collection of black-edge cycles with a total of~$2m$ vertices, each of which is incident
	to exactly one of the $m$ gray edges. Note that each black cycle~$B_i$ in~$H$ corresponds to a
	cycle~$D_i$ (associated with variable~$x_i$) in~$G$, where~$B_i$ was constructed from~$D_i$
	by smoothing black vertices. Thus we can associate every black cycle~$B_i$ in~$H$ with a
	variable~$x_i$ in $\psi$. We claim that setting all such variables~$x_i$ that have a black
	cycle~$C_i$ in $H$ to false and all other variables to true is a 1-in-3 satisfying assignment
	of~$\psi$.

	Consider any clause~$\{x_i, x_j, x_k\}$ in~$\psi$. The corresponding gray vertex~$u_{ijk}$
	in~$G$ was smoothed into a gray edge~$e_{ijk}$ in~$H$, since all~$m$ gray are present in~$H$.
	Accordingly, exactly two of the three black cycles~$B_i,B_j,B_k$ are contained in~$H$. Thus the
	assignment constructed above will set exactly two of the variables~$x_i,x_j,x_k$ to false and one
	variable to true. This argument holds for every clause in~$\psi$ and we conclude that the constructed
	assignment is 1-in-3 satisfying, proving the lemma.
\end{proof}

\noindent
This concludes the reduction from \PosSAT. Note that an optimal solution in
the reduction necessarily does not use any edges from the original graph, but
only edges resulting from contractions. Therefore the reduction works for
both~\Problem{Dense \half-STM} and~\Problem{Dense \half-SD}.
As noted above, for~$r = 1$ the constructed graph is bipartite. Since the
latter set has degree at most three, Theorem~\ref{thm:dense-bipartite-subdiv}
follows.

In order for the reduction to work for arbitrary odd~$r$, we need to modify
the construction in two places: first, we subdivide every edge in the
clause gadget~$(r-1)/2$ times. Second, instead of subdividing all edges
appearing in the cycles~$D_1,\ldots,D_p$ and edges incident to the apex~$a$
once, we subdivide them~$r$ times. The correctness of this reduction follows
from easy modifications to Lemma~\ref{lem:nablahalf-sat-density}
and~\ref{lem:nablahalf-sat-correctness}, concluding our proof of Theorem~\ref{thm:np-hardness}
for odd values of~$r$. Finally, to see that
the above reduction also proofs Theorem~\ref{thm:eth-lowerbound} for odd~$r$,
simply note that the reduction results in a graph of size~$\Theta(m)$ and the
ETH lower bound follows from Proposition~\ref{prop:possat-eth}.

%
%
\paragraph*{Reduction for $r$ even}

\noindent
Let $\psi$ be a \PosSAT instance as described above.
Construct graph $G$ in the following manner. We once again create a cycle
$D_i$ for each variable $x_i$, connect an apex vertex $a$ to each vertex on
the cycles, and color these vertices white. For this construction, however, we
subdivide all edges between white vertices twice i.e. each white-white edge is
replaced by a three-edge path. As with our previous construction, the
subdivision vertices are all colored black. For each clause $C_i = \{x_j, x_k, x_\ell\}$,
we add a triangle $u_{ij}, u_{ij}, u_{ik}$ and
connect it to the vertices from $D_j$, $D_k$, and $D_\ell$ corresponding to
$C_i$ such that $u_{ij}$ is incident to the vertex from $D_j$ etc. We color
each of these vertices gray.

\begin{lemma}\label{lem:nablaone-sat-density}
    If $\psi$ is satisfiable then $G$ has a topological minor at depth~$1$
    of density $\frac{5m}{2m+1}$.
\end{lemma}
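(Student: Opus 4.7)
The plan is to mirror the proof of Lemma~\ref{lem:nablahalf-sat-density}. Given a $1$-in-$3$ satisfying assignment of $\psi$, we construct a model of a depth-$1$ topological minor $H$ whose nail set consists of the apex $a$ together with the $2m$ white vertices lying on the cycles $D_i$ of variables set to \emph{false}; this already gives exactly $2m+1$ nails. Each edge of $H$ will be realised by a path in $G$ of length at most $3$ with its (at most two) internal vertices smoothed, and the main bookkeeping requirement is that these paths be pairwise internally vertex-disjoint.

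First, for every edge of the unsubdivided structure (either apex--white or white--white within a surviving cycle) whose endpoints both remain as nails, we use the 3-path through the two black subdivision vertices lying on it in $G$. This yields $2m$ apex edges, one per surviving white nail, and, since $\sum_{x_i \text{ false}} k_i = 3m - m = 2m$, a further $2m$ cycle edges (each clause containing exactly one true variable guarantees the cancellation).

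For each clause $C_i = \{x_j, x_k, x_\ell\}$ whose true variable is (say) $x_j$, the gray vertex $u_{ij}$ is left dangling because the white vertex it was attached to was deleted along with $D_j$. The crucial observation is that the remaining two gray vertices $u_{ik}, u_{i\ell}$ of the triangle are joined by the triangle edge, so the path $w_k - u_{ik} - u_{i\ell} - w_\ell$ (where $w_k \in D_k$ and $w_\ell \in D_\ell$ are their white neighbours) has length exactly $3$ and its internal vertices may be smoothed at depth $1$. This contributes one clause edge per clause, for $m$ clause edges in total. The unused gray vertex $u_{ij}$, together with the two other triangle edges, is simply discarded from the model.

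Disjointness is immediate: each black vertex lies on a unique white--white path of $G$, and each gray vertex belongs to a single clause triangle which is used for at most one clause path. Counting, $H$ has $2m+1$ nails and $2m+2m+m=5m$ edges, giving density $5m/(2m+1)$ as required. No serious obstacle is expected; the one place where the depth-$1$ budget (rather than depth $\half$) is genuinely needed is routing the clause edge through \emph{two} gray vertices via the triangle edge, which is precisely the reason the clause gadget was enlarged from a single vertex (in the odd-$r$ case) to a triangle.
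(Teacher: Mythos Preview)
Your proof is correct and follows essentially the same approach as the paper: delete the cycles of true variables, smooth the black $3$-paths to recover the apex and cycle edges, and route each clause edge through the two surviving gray triangle vertices. Your explicit identification of the length-$3$ path $w_k\!-\!u_{ik}\!-\!u_{i\ell}\!-\!w_\ell$ and the disjointness check are slightly more detailed than the paper's presentation, but the argument and the edge count $2m+2m+m=5m$ are identical.
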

\begin{proof}
    We construct the minor $H$ by first smoothing each black vertex. Then, for
    each variable set to true, we delete the corresponding cycle $D_i$. Since
    $\psi$ is satisfiable and each clause has exactly one variable set to
    true, each gray triangle has two vertices of degree three and one of degree
    two. The degree two gray vertices are deleted, leaving the remaining gray
    vertices to lie on three-edge paths between white vertices. These paths are
    subsequently smoothed to create white-white edges.

    $V(H)$ consists of exactly two vertices corresponding to each clause plus
    $a$, for a total of $2m+1$ vertices. Since all vertices of $H$ were colored
    white in $G$, $a$ has degree $2m$. Aside from the edges incident to $a$,
    there are $m$ edges from smoothing gray vertices and $2m$ edges from
    smoothing black vertices, which yield a total of $5m$ edges. Thus, we have
    found a minor at depth $1$ of density $\frac{5m}{2m+1}$.
\end{proof}

\begin{lemma}\label{lem:nablaone-sat-correctness}
    If $G$ has a topological minor at depth $1$ of density $\frac{5m}{2m+1}$ then
    $\psi$ is satisfiable.
\end{lemma}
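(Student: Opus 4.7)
The plan is to mirror the proof of Lemma~\ref{lem:nablahalf-sat-correctness}, with the main new difficulty arising because each clause gadget is now a gray triangle rather than an isolated gray vertex. Fix a depth-$1$ topological minor $H$ of $G$ with density at least $\frac{5m}{2m+1}$ and a witnessing model, chosen to maximise $\|H\|/|H|$. The first goal is to show that the nails of $H$ are contained in $\{a\}\cup\{\text{white vertices of }G\}$. Since $\frac{5m}{2m+1}>2$ for $m\geq 3$ (small cases being trivially decidable), the minimum degree of $H$ is at least three, so black vertices, which have $G$-degree two, cannot be nails.

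The new twist is ruling out gray nails. Suppose for contradiction that $y=u_{ij}$ is a gray nail; then $y$ has $H$-degree exactly three, and its three incident $H$-edges are realised by internally vertex-disjoint paths of length at most three in $G$ starting through the $G$-neighbours $u_{ik}$, $u_{i\ell}$, $w_{ij}$. The white vertex $w_{ij}$ has $G$-degree four (one gray edge plus three incident black subdivisions), so it cannot serve as a subdivision vertex; hence the $y$--$w_{ij}$ path has length one and $w_{ij}$ is itself a nail. A short case analysis of the same flavour as in the odd case shows that each of the other two paths terminates either at the adjacent gray vertex (length one) or at the corresponding white vertex (length two); a length-three termination would require a black nail, already excluded. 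In each configuration I would produce a strictly denser $H'$ by demoting $y$ from the nail set and reusing $y$ as an internal subdivision vertex along a length-${\leq}3$ path that glues two of $y$'s three incident $H$-edges together, discarding the third. The resulting $H'$ satisfies $|H'|=|H|-1$ and $\|H'\|\leq\|H\|-2$, so by the same arithmetic used in the odd case (valid whenever density exceeds two), $\|H'\|/|H'|>\|H\|/|H|$, contradicting optimality. Triangles carrying several gray nails are handled by iterating the same local surgery. Finally, $a$ itself must be a nail, for otherwise every vertex of $H$ has $G$-degree at most four, forcing $\|H\|/|H|\leq 2<\frac{5m}{2m+1}$.

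Once the nail set is pinned down, the rest of the proof transfers essentially verbatim from the odd case. Writing $v_3,v_4$ for the numbers of degree-three and degree-four white nails of $H$, $w=v_3+v_4$, and $\alpha=v_4/w$, and noting that there are at most $m$ gray edges (one per clause gadget), one obtains the density identity $\|H\|/|H|=(2+\alpha/2)\cdot w/(w+1)$ together with the bound $v_4\leq 2m$. Substituting $w=\beta m$ reproduces inequality~$(\star)$, and the two-case analysis ($\beta=2$ versus $2<\beta\leq 3$) forces $\alpha=1$ and $w=2m$. Hence every white nail has degree four, $H$ contains exactly $m$ gray edges, and exactly two of the three cycles $D_j,D_k,D_\ell$ associated with each clause $\{x_j,x_k,x_\ell\}$ appear in $H$. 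Setting a variable false precisely when its cycle survives in $H$, and true otherwise, then yields a 1-in-3 satisfying assignment of $\psi$. The main obstacle is clearly the gray-nail elimination step: unlike the odd case, where a gray vertex directly connects three pairwise non-adjacent whites, the triangular clause gadget gives rise to a richer family of path configurations at a gray nail, each of which must be shown to admit a local rewriting into a strictly denser minor.
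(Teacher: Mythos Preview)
Your outline follows the paper's proof closely: both arguments eliminate black nails by degree, eliminate gray nails, show $a$ must be a nail, and then transfer the counting argument of Lemma~\ref{lem:nablahalf-sat-correctness} verbatim. The paper's gray-nail elimination is phrased slightly differently (it enumerates the ways a gray triangle can participate and observes that a surviving gray nail is adjacent to three whites, which is exactly the odd-case configuration already shown to be suboptimal), but your local-surgery version achieves the same thing.

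There is one concrete slip in your case analysis. You claim that a length-three path from the gray nail $y=u_{ij}$ starting through $u_{ik}$ ``would require a black nail''. That is false: the path $y\,\text{--}\,u_{ik}\,\text{--}\,u_{i\ell}\,\text{--}\,w_{i\ell}$ has length three and ends at a white vertex. The correct reason the two non-$w_{ij}$ paths have length at most two is disjointness: any length-three path through $u_{ik}$ must pass through $u_{i\ell}$ (the only non-$y$, non-white neighbour of $u_{ik}$), and then $y$'s third path, which has to start through $u_{i\ell}$, cannot exist. Hence $\deg_H(y)\le 2$, a contradiction. With this fix your surgery goes through: combining the length-one path $y\,\text{--}\,w_{ij}$ with either remaining path yields a path of length at most three between nails that are not already adjacent in $H$ (the only candidate $\le 3$-paths between them in $G$ pass through $y$), so $|H'|=|H|-1$, $\|H'\|=\|H\|-2$, and density strictly increases since $\|H\|/|H|>2$.
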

\begin{proof}
    Let $H'$ be the densest topological minor at depth $1$. For the same reasons
    presented in Lemma~\ref{lem:nablahalf-sat-correctness}, $H'$ has no black nails, and
    thus we can smooth all black vertices into white-white edges. This lack of
    black nails also implies that no white vertices can be smoothed to form a
    new edge incident to a gray vertex.

    If $H'$ contains all the gray and white vertices, it has $3m$ degree 4 white
    vertices, $3m$ degree 3 gray vertices, and $a$ with degree $3m$ for a total
    of $12m$ edges and $6m+1$ vertices. This implies a density below
    $\frac{5m}{2m+1}$, and thus not all white and gray vertices are nails.

    Since the gray vertices induce triangles, there is no way to smooth gray
    vertices to create a new gray-gray edge. Consider one such triangle $T_i$.
    If we smooth two vertices in $T_i$ to create a single gray-white edge, the
    gray nail has degree 2 and should be deleted instead to increase the
    density. On the other hand, smoothing exactly one gray vertex to create a
    gray-white edge cause the remaining gray vertex to have degree two. Thus,
    any gray nail in $H'$ is adjacent to three white vertices. Note that instead
    of having a gray nail, we could delete one gray vertex and smooth the other
    two into a white-white edge. The proof in Lemma~\ref{lem:nablahalf-sat-correctness}
    already demonstrated that forming the white-white edges is necessary to
    yield a density of $\frac{5m}{2m+1}$, and thus $H'$ has no gray nails.

    Since the gray vertices must be smoothed and deleted to create two degree 4
    vertices and one degree 3 vertex per clause, the arguments in
    Lemma~\ref{lem:nablahalf-sat-correctness} imply that for $H'$ to have density
    $\frac{5m}{2m+1}$, $\psi$ must be satisfiable.
\end{proof}

\noindent
In order for the reduction to work for arbitrary even~$r$, we again modify
the construction in two places: first, we subdivide every edge of the
triangle making up the clause gadget~$r/2-1$ times. Second, instead of subdividing all edges
appearing in the cycles~$D_1,\ldots,D_p$ and edges incident to the apex~$a$
twice, we subdivide them~$r$ times.
With both cases of~$r$ even or odd covered, we conclude that
Theorem~\ref{thm:np-hardness} and Theorem~\ref{thm:eth-lowerbound} hold true.

\subsection{Excluding a $2^{o(\tw^2)} n^{O(1)}$-algorithm}\label{sec:tw-lower-bound}
We show in this section that the ETH implies that we cannot get a
single-exponential algorithm parameterized by treewidth
for~\Problem{Dense $r/2$-STM} for~$r \geq 2$.

\begin{theorem}\label{thm:SETH_lower_bound}
  Unless the ETH fails, there is no algorithm that decides \Problem{Dense 1-Shallow
  Topological Minor} on a graph with treewidth $t$ in time~$2^{o(t^2)} n^{O(1)}$.
\end{theorem}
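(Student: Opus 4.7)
The plan is to give a linear reduction from \PosSAT that produces a \Problem{Dense $1$-STM} instance of treewidth $O(\sqrt{n+m})$; combined with Proposition~\ref{prop:possat-eth}, this will yield the claimed $2^{o(t^2)}$ lower bound, since a $2^{o(t^2)} n^{O(1)}$ algorithm on the reduction output would decide \PosSAT in $2^{o(n+m)} (n+m)^{O(1)}$ time.

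Given $\psi$ with $n$ variables and $m = \Theta(n)$ clauses, the first step is to partition the variables into $k = \lceil \sqrt{n} \rceil$ groups of size at most $k$ and distribute the clauses into $k$ bundles. The graph $G$ will be a $k \times k$ grid-shaped skeleton of constant-size gadgets: the gadget at position $(i,j)$ carries (a copy of) the assignment of the $i$-th variable group and witnesses the $j$-th clause bundle. Within each row, the $k$ gadgets will be coupled by short ``consistency paths'' whose participation in the $1$-STM contributes to the density only when the two adjacent cells encode the same assignment of the shared variable group, thereby forcing all cells of row $i$ to agree on a single assignment. The density threshold $d$ will be calibrated so that it is met iff every row agrees on one assignment and every clause bundle is satisfied in the exactly-one-true fashion inherited from \PosSAT.

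The treewidth bound $O(k) = O(\sqrt{n})$ will follow from sweeping the grid column by column: each gadget has constant size, the interface between any two adjacent columns has $O(k)$ vertices, and there is no global object (the apex vertex used in the Section~\ref{sec:np-hardness-eth} reduction is conspicuously absent here). Hence $t^2 = O(n)$, which matches the scale at which \PosSAT is ETH-hard, and the linear dependence of $|G|$ on $n+m$ preserves the exponent.

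The main obstacle, I expect, will be a sufficiently delicate density calibration. The Section~\ref{sec:np-hardness-eth} reduction leveraged the apex vertex both as a global density anchor and to pin down the nail structure; without it, every non-canonical $1$-STM---one that fails consistency in some row, selects more than one true literal in some clause, or tries to cheat by smoothing extra subdivision vertices---must be shown to fall strictly below $d$. This will likely demand a local amortised accounting, similar in spirit to the case analysis in Lemma~\ref{lem:nablahalf-sat-correctness} but carried out independently in each row and each column, so that any single local deviation costs a positive density deficit that cannot be compensated elsewhere in the grid. Once this per-cell robustness is established, the global correctness and the treewidth estimate will plug directly into the ETH argument sketched above.
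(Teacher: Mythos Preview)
Your plan has an information-theoretic obstruction that precedes the density-calibration issue you flag. You want each row of the $k\times k$ grid to carry the assignment of a group of up to $\lceil\sqrt{n}\rceil$ variables, with all $k$ cells in the row forced by consistency paths to encode \emph{the same} assignment. But a constant-size cell admits only $O(1)$ distinguishable configurations inside any $1$-STM, so a row whose cells must all agree encodes $O(1)$ bits, not $\sqrt{n}$. Enlarging cells to size $\Theta(\sqrt{n})$ restores the encoding but blows the column interface up to $\Theta(n)$ vertices and kills the treewidth bound. No rearrangement of a constant-cell grid with a constant density threshold escapes this counting barrier: a width-$O(\sqrt{n})$ separator that every solution must cross can transmit at most $2^{O(\sqrt{n})}$ states through a constant-density $1$-STM, far short of the $2^n$ assignments you need to distinguish.

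The paper's construction sidesteps this by abandoning both the variable-grouping idea and the constant density. Each of the $n$ variables gets its own length-$m$ path through a $\sqrt{n}\times m$ grid $R$, with the paths laid at $\sqrt{n}$ different ``slopes'' so that every grid vertex lies on $\sqrt{n}$ of them; the intended minor turns $R$ into a $2\sqrt{n}$-regular union of $n$ cycles and each clause gadget into a biclique $K_{\sqrt{n},\sqrt{n}}$, giving target density $\rho = \tfrac{4\sqrt{n}}{3}$. The $\Theta(\sqrt{n})$ threshold does double duty: it immediately excludes all auxiliary decision-gadget vertices (degree $\leq 4$) as nails, pinning the nail set to exactly $R\cup\bigcup_i(A_i\cup B_i)$ without any delicate local accounting, and a regularity argument then shows only the full intended minor attains $\rho$. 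The key idea you are missing is to superimpose $\sqrt{n}$ variable paths through each grid vertex and raise the density target to $\Theta(\sqrt{n})$; this simultaneously packs $n$ bits across a width-$\sqrt{n}$ column and replaces the apex vertex as density anchor.
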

\noindent Our proof proceeds via a reduction from \Problem{CNF-SAT}. Assume
that the CNF formula $\Phi$ with variables $x_1,\dots, x_n$ and clauses $C_1,
\dots, C_m$ is such that $\sqrt{n}$ is an even integer; if not, we pad $\Phi$
with dummy variables that appear in no clauses, which does not affect the
answer to $\Phi$. Figure~\ref{fig:reductionB} contains a sketch of the construction
outlined in the following.

\begin{figure}
  \centering\includegraphics[width=.7\textwidth]{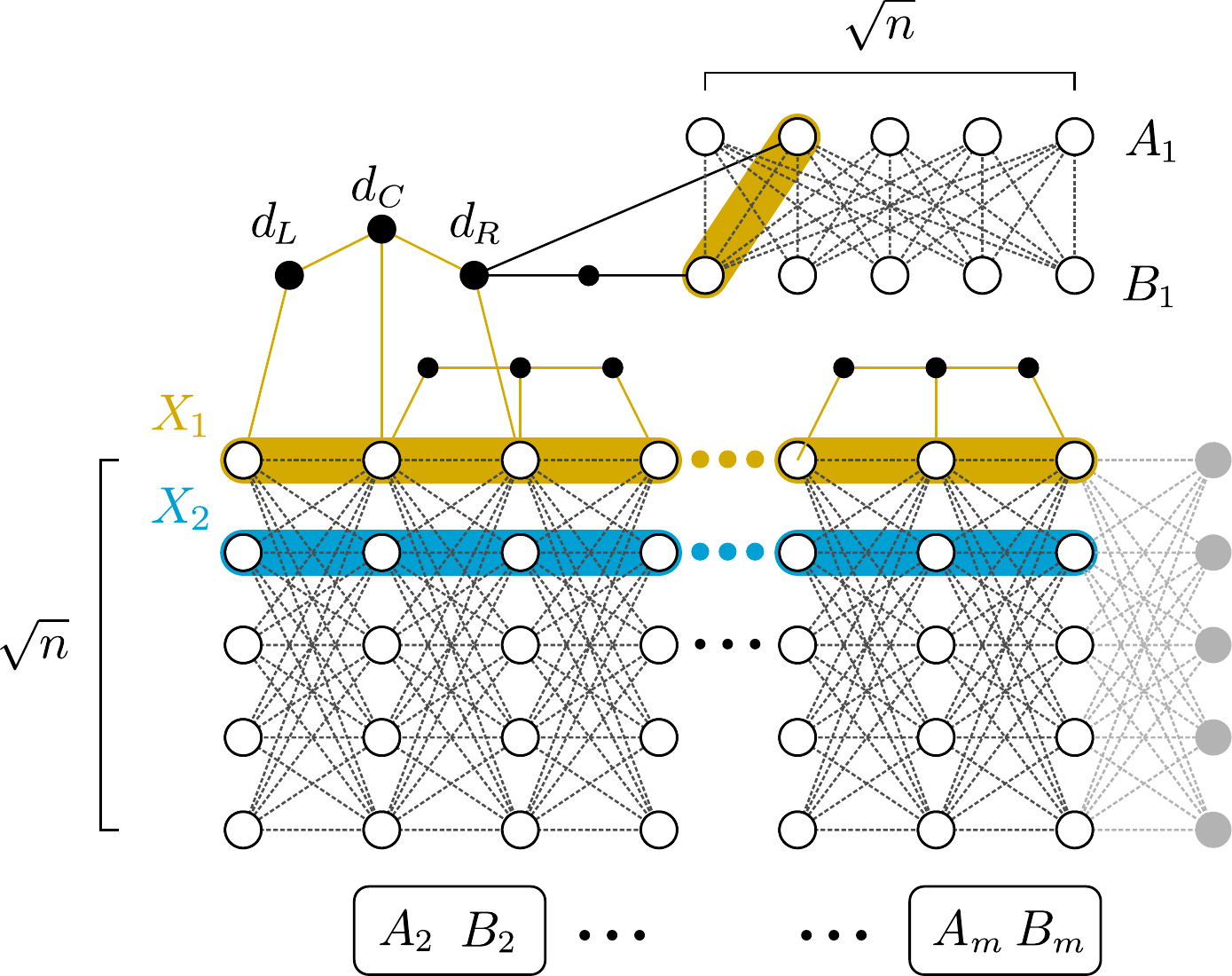}
  \caption{\label{fig:reductionB}%
    A sketch of the construction for Theorem~\ref{thm:SETH_lower_bound}, with an exemplary
    connection of the variable-path~$X_1$ to the first clause gadget (here, $x_1$ appears
    negatively in~$C_1$). Dashed edges denote parts that are actually connected via
    decision gadgets. $3$-paths between the grid~$R$ and the clause gadgets~$(A_i,B_i)$ are not drawn.
  }
\end{figure}

\smallskip\noindent\emph{Decision gadget}:
The reduction will use sequences of vertices connected by decision gadgets.
The decision gadget is a path of three vertices $d_L,d_C,d_R$ which we will always connect to a sequence of three vertices.
For a sequence of vertices $v_1,v_2,v_3$, connecting the decision gadget to the sequence involves adding the edges $\{d_L,v_1\}$,
$\{d_C,v_2\}$, and $\{d_R,v_3\}$.

\smallskip\noindent\emph{Variable gadgets}:
We construct a grid of vertices $R$ with $\sqrt{n}$ rows and $m$ columns,
denoting with $R[i,j]$ the vertex in the $i$th row and $j$th column. Each
variable $x_i$ will be represented by a sequence of $m$ vertices $X_i$, one
from each column.  We will denote with $X_i[j]$ the $j$th vertex in the
sequence $X_i$ for any $1\leq j\leq m$.  In order to represent each of the $n$ variables with $m$ vertices using a $\sqrt{n} \times m$ grid, the sequences $X_i$ must overlap and each vertex in $R$ is part of the representation of $\sqrt n$ varibles.
Specifically, let
$$X_i[j] = R[(i+j\lfloor i/\sqrt{n}\rfloor)\bmod\sqrt{n} , j].$$
In other words, two vertices in successive columns will be in the same row in sequences $X_1, \dots , X_{\sqrt{n}}$;
they will be one row apart (``wrapping around'' to the top from the bottom) in sequences $X_{1+\sqrt{n}}, \dots, X_{2\sqrt{n}}$,
two rows apart in $X_{1+2\sqrt{n}},\dots, X_{3\sqrt{n}}$, and so on.

For each sequence $X_i$, we connect $X_i[j-1]$, $X_i[j]$, and $X_i[j+1]$ to a
decision gadget. Denote such a decision gadget as $D_{i,j}$. We also ``wrap
around'' $X_i$ by connecting $X_i[m-1],X_i[m], X_i[1]$ and $X_i[m],X_i[1],
X_i[2]$ to their own decision gadgets.

\smallskip\noindent\emph{Clause gadgets}:
Each clause $C_i$ will be represented by a
bipartition of vertices $A_i,B_i$ where $|A_i|=|B_i| = \sqrt{n}$. Let
$A_i[j]$ be the $j$th vertex in $A_i$ and $B_i[j]$ likewise. Let $\sigma$ be an
ordering of the vertices in $A_i\cup B_i$ corresponding to an Eulerian tour of
a biclique with bipartition $A_i, B_i$. Assume without loss of
generality that $\sigma_i = A_i[\sqrt{n}],
B_i[1],\dots, B_i[\sqrt n], A_i[\sqrt{n}]$ and note that every pair of
vertices $a\in A_i$ and $b\in B_i$ appears consecutively exactly once in
$\sigma$. For each consecutive triple of vertices in $\sigma_i$, attach a
decision gadget (but do not ``wrap around''). \looseness-1


\smallskip\noindent\emph{Connecting variables and clauses}:
For each pair of vertices $(A_i[j], B_i[k])$ for $1\leq j,k\leq \sqrt{n}$
assign the pair with a unique variable $x_\ell$. Connect $X_\ell[i]$ to
$A_i[j]$ and $B_i[k]$ via 3-edge paths. If $x_\ell$ appears in clause $C_i$
positively, connect $d_L$ of the decision gadget $D_{i,\ell}$ to $B_i[k]$ via
an edge and to $A_i[j]$ via a 2-edge path. If it appears negatively, add the
same connections to $d_R$ of $D_{i,\ell}$ instead.

\medskip\noindent
With the description of the reduction completed, let us now proof its correctness,
\ie We prove that $G$ has a 1-STM of density
$\rho = \frac{4\sqrt{n}}{3}$ if and only if $\Phi$ is satisfiable.

\noindent\emph{Forward direction}:
To prove the forward direction, we show how the satisfying assignment yields a
topological minor of the desired density. We note that a cyclical sequence of
vertices joined by decision gadgets can form a cycle in one of two ways: by
smoothing each $d_C$ and $d_L$ or each $d_C$ and $d_R$. Let the former be know
as the \emph{left configuration} of those sequenced gadgets and the latter the
\emph{right configuration}. Create a cycle on the vertices in $X_i$ by
choosing the right configuration if $x_i$ is true and the left configuration
if $x_i$ is false.

For each clause, pick an arbitrary variable $x_\ell$ that satisfies it and let
$a$ and $b$ be the pair of vertices from $A_i$ and $B_i$ assigned to that
variable. If $a <_\sigma b$, set all decision gadgets preceding $a$ in
$\sigma$ to the left configuration and all the decision gadgets succeeding $b$
to the right configuration; do the reverse if $b<_\sigma a$. Thus $A_i$ and
$B_i$ form a biclique missing the $ab$ edge. Since there is a 3-edge path from
$a$ to $b$ in $G$ through a vertex in $D_{i,\ell}$ that has not been smoothed,
we can use that path to form the $ab$ edge. Smooth the remaining 3-edge
induced paths in $G$ and delete the vertices from the decision gadgets that
were not contracted.

The nails of the resulting minor are exactly $\mathcal A\cup \mathcal B\cup
R$. There are $m\sqrt{n}$ vertices in $R$ and each one participates in
$\sqrt{n}$ variable gadgets. Since each variable gadget becomes a cycle there
are $mn$ edges within $R$. The $m$ clause gadgets become bicliques on
$2\sqrt{n}$ vertices each and thus contain $mn$ edges in total. Each variable
gadget ends up with two edges into each clause gadget, for a total of $2mn$
edges connecting them. In total, this makes $4mn$ edges and $3m\sqrt{n}$
vertices, exactly $\rho$.

\smallskip\noindent\emph{Reverse direction}:
We now prove the reverse direction by assuming $\Phi$ is unsatisfiable.
Let $H'$ be a 1-STM with density $\rho$.

Since $\rho$ is $\Theta(\sqrt{n})$ and the vertices in the induced paths and
decision gadgets have degree at most 4, we can assume that none of those
vertices appears as a nail in $H'$. Thus, the nail set of $H'$ is a subset of
$\mathcal A\cup \mathcal B\cup R$. The only paths between these nail
candidates that use at most three edges do not contain nail candidates as
interior vertices, meaning nail candidates are never smoothed. Let $H(\Phi)$
be the minor constructed by the process described in the forward direction
proof for an arbitrary satisfying assignment of~$\Phi$. Observe that for fixed
$n$ and $m$, the minor $H(\psi)$ is identical for every satisfiable formula
$\psi$; let $\mathcal{H}({n,m})$ be that minor. Moreover, every pair of nail
candidates that has a 3-edge path between them in $G$ is adjacent in
$\mathcal{H}({n,m})$, meaning that $H'$ is a either
a subgraph of or identical to $\mathcal{H}(n,m)$.

We now show that no proper induced subgraph of $\mathcal H(n,m)$ has density
$\rho$. If a graph is $d$-regular and connected, it has edge density $d/2$;
deleting part of a degree regular graph leaves vertices with degree less than
$d$, so no proper subgraph reaches that density. Therefore $R$ and each
$A_i\cup B_i$ achieve their maximum densities of $\sqrt{n}$ and $\sqrt{n}/2$
only when including the entire subgraph, implying a dense subgraph must
contain portions of both vertex and clause gadgets. The only vehicle for
increasing density is to use edges between vertex and clause gadgets, which
means we should only include a vertex in $R$ in a subgraph if it also contains
its neighbors from $\mathcal A$ and $\mathcal B$. Let $G'$ be the subgraph of
$G$ induced on an $i\times j$ subgrid of $R$ and all of its neighbors in
$\mathcal A\cup \mathcal B$. The density within the subset of $\mathcal A \cup
\mathcal B$ is greatest when those vertices induce $j\times j$ bicliques, so
we assume they do. Under this assumption, the density of $G'$ is
$\frac{4i}{3}$ if $j=m$ and $\frac{(4j-1)i}{3j}$ if $j<m$ since the edges that
wrap around $R$ cannot be realized. In either case, the density is strictly
less than $\rho$ unless $j=m$ and $i=\sqrt n$ i.e. exactly $\mathcal H(n,m)$.

A decision gadget connected to sequential vertices $v_1, v_2, v_3$ can only
create the edge $v_1,v_2$ or the edge $v_2,v_3$, since $d_C$ needs to be
smoothed to construct either edge. Consequently, choosing to set some decision
gadgets in the same variable gadget to opposite configurations (or neither
configuration) creates at least one fewer edge than if they were all set to
the same decision. Thus, the configurations of the variable gadgets correspond
to some truth assignment to $\Phi$ in the way intended in $H(\Phi)$. This
indicates that there is a clause gadget that has no neighbors in a variable
gadget that can be used to be smoothed into an edge in the clause gadget.
However, because there is one fewer decision gadget than the number of
biclique edges in the clause gadgets of $\mathcal H(n,m)$, $H'$ cannot realize
all possible edges in the biclique and thus there is no 1-STM of density
$\rho$.

\medskip
\noindent It stands to prove that the above reduction has the proper
implications for a parameterization by treewidth. Using cops and robbers,
we can show that $G$ has treewidth $O(\sqrt n)$ as follows: We
permanently station $\sqrt n$ cops on the first column of $R$. We use
$2\sqrt{n}$ cops to walk through the columns of $R$ sequentially; when a
column is completely covered with cops we can explore its corresponding clause
gadget with a separate unit of $2\sqrt{n}$ cops. In total, this requires
$O(\sqrt{n})$ cops.
Although the formula $\Phi$ may have been padded with additional variables, it
would only have been enough to increase $\sqrt n$ by 2. This means the number
of variables in the unpadded instance is still $\Theta(n)$. Thus, if an
algorithm parameterized by treewidth $t$ could find a dense 1-STM in time
$2^{o(t^2)}n^{O(1)}$, then we could use our reduction to solve \Problem{CNF-SAT} in
time $2^{o(n)}n^{O(1)}$, violating ETH. This concludes the proof of Theorem~\ref{thm:SETH_lower_bound}.

An immediate consequence is that, unlike
\Problem{Dense \half-STM}, \Problem{Dense $1$-STM} is still NP-hard when the exact nail set
is known.

\section{Conclusion}
\noindent
We showed that finding dense substructures that are just slightly less local than
subgraphs is computationally hard, and even a parameterization by treewidth cannot
provide very efficient algorithms. While our first reduction excludes a subexponential
exact algorithm assuming the ETH, we could not exclude an algorithm with a running time
of~$(2-\eps)^n n^{O(1)}$. Is such an algorithm possible for~$r=1$, or can one find a
tighter reduction that provides a corresponding SETH lower bound?
Our second reduction rules out a $2^{o(\tw^2)} n^{O(1)}$-algorithm
for~$r = 2$. Is a faster algorithm for~$r=1$ possible?

Finally, we ask whether there is a sensible notion of substructures that fit in between
\half-shallow topological minors and subgraphs for which we can find the
densest occurrence in polynomial time.

\noindent\textbf{Acknowledgements}.
{\small This work supported in part by the DARPA GRAPHS Program and the
Gordon \& Betty Moore Foundation's Data-Driven Discovery Initiative through
Grants SPAWAR-N66001-14-1-4063 and GBMF4560 to Blair D.~Sullivan.}

\bibliographystyle{plain}
\bibliography{biblio}

\end{document}